%draft version for journal on Apr. 7, 2012 
%updated version ready for submission
%revised from Dec. 2012
%revised for publication Aug. 2013
\documentclass{amsproc}
\usepackage{amsmath}
\usepackage{cite}
\usepackage{footnote}
\usepackage{amssymb}
\usepackage{mathrsfs}
\usepackage[dvipspdf]{xcolor}

  \textheight=8.2 true in
   \textwidth=5.0 true in
    \topmargin 30pt
     \setcounter{page}{1}

% The next 5 line will be entered by an editorial staff.

%% Use this to add footnotes 
\def\zhou#1 {\fbox {\footnote {\ }}\ \footnotetext { From Zhou: {\color{red}#1}}}

\def\alex#1 {\fbox {\footnote {\ }}\ \footnotetext { From Alex: {\color{blue}#1}}}

\def\qi#1 {\fbox {\footnote {\ }}\ \footnotetext { From Qi: {\color{orange}#1}}}

%% Use this to add new text or change the main text.

%\newbox\ProofSym
%\setbox\ProofSym=\hbox{%
%\unitlength=0.18ex%
%\begin{picture}(10,10)
%\put(0,0){\framebox(9,9){}}
%\put(0,3){\framebox(6,6){}}
%\end{picture}}

%\def\proof{\par\noindent {\it Proof.}\ }
%\def\endproof{\hfill\hbox{$\quad \copy\ProofSym$}\break}

\newcommand{\bZ}{{\bf Z}} 
\newcommand{\tr}{{\rm Tr}} 

\newcommand{\gf}{{\mathbb F}}

\newcommand{\calp}{{\mathcal P}}
\newcommand{\calb}{{\mathcal B}}
\newcommand{\cals}{{\mathcal S}}

\newcommand{\calf}{{\mathcal F}}
\newcommand{\calc}{{\mathcal C}}
\newcommand{\lc}{{\rm LC}}

\newtheorem{theorem}{Theorem}[section]
\newtheorem{corollary}{Corollary}

\newtheorem{lemma}[theorem]{Lemma}
\newtheorem{proposition}{Proposition}

\theoremstyle{definition}

\newtheorem{example}{Example}
\newtheorem{remark}{Remark}

%% Place the running title of the paper with 40 letters or less in []
 %% and the full title of the paper in { }.
\title[Sets of ZDB Functions and Applications]
      {Sets of Zero-Difference Balanced Functions and Their Applications}

% Place all authors' names in [ ] shown as running head;
% No more than 40 letters. Leave { } empty
% Please use `and' to connect the last two names if applicable
\author[Qi Wang and Yue Zhou]{}

% It is required to enter 2010 MSC.
\subjclass{Primary: 05B10; Secondary: 94A55.}
% Please provide minimum  5 keywords.
 \keywords{constant weight code, set of frequency-hopping sequences, partitioned difference family, zero-difference balanced function}

% Email address of each of all authors is required.
% You may list email addresses of all other authors, separately.

\email{qi.wang@ovgu.de}
\email{yue.zhou.ovgu@gmail.com}

% Put your short thanks below. For long thanks/acknowlegements,
%please go to the last acknowlegments section.

%\thanks{The first author is supported by NSF grant xx-xxxx}
\thanks{The first author's research was supported by the Alexander von Humboldt (AvH) Stiftung/Foundation. The second author was affiliated with Institute of Algebra and Geometry, Otto-von-Guericke University Magdeburg, 39106 Magdeburg, Germany.}

\begin{document}

\maketitle

% Enter the first author's name and address:
%\centerline{\scshape Qi Wang }
%\medskip
%{\footnotesize
%% please put the address of the first author
 %\centerline{Institute of Algebra and Geometry}
   %\centerline{Otto-von-Guericke University Magdeburg}
   %\centerline{39106, Magdeburg, Germany}
%} % Do not forget to end the {\footnotesize by the sign }

%\medskip

\centerline{\scshape Qi Wang}
\medskip
{\footnotesize
 % please put the address of the second  and third author
 \centerline{Institute of Algebra and Geometry}
   \centerline{Otto-von-Guericke University Magdeburg}
   \centerline{39106 Magdeburg, Germany}
}

\bigskip
\centerline{\scshape Yue Zhou}
\medskip
{\footnotesize
 % please put the address of the second  and third author
 \centerline{Department of Mathematics and System Sciences}
   \centerline{National University of Defense Technology}
   \centerline{410073, Changsha, China}
}

\bigskip

% The name of the associate editor will be entered by an editorial staff
% "Communicated by the associate editor name" is not needed for special issue.
 \centerline{(Communicated by the associate editor name)}

%The abstract of your paper
\begin{abstract}
Zero-difference balanced (ZDB) functions can be employed in many applications, e.g., optimal constant composition codes, optimal and perfect difference systems of sets, optimal frequency hopping sequences, etc. In this paper, two results are summarized to characterize ZDB functions, among which a lower bound is used to achieve optimality in applications and determine the size of preimage sets of ZDB functions. As the main contribution, a generic construction of ZDB functions is presented, and many new classes of ZDB functions can be generated. This construction is then extended to construct a set of ZDB functions, in which any two ZDB functions are related uniformly. Furthermore, some applications of such sets of ZDB functions are also introduced.
\end{abstract}

\section{Introduction}\label{sec-intro}

Let $(A, +)$ and $(B,+)$ be two abelian groups of orders $n$ and $\ell$, respectively. For a function $f$ from $A$ onto $B$, define
$$
N_b(a) := \big| \{x \in A: f(x+a) - f(x) = b \} \big|.
$$
If $N_b(a) = \frac{n}{\ell}$ for all $b \in B$ and all nonzero $a \in A$, the function $f$ is called {\em planar} or {\em perfect nonlinear}~\cite{DO68,Ny91}. If $N_0(a) = \frac{n+1}{\ell} - 1$ for each nonzero $a \in A$ and $N_b(a) = \frac{n+1}{\ell}$ for each nonzero $b \in B$ and each nonzero $a \in A$, $f$ is called a {\em difference balanced} function~\cite{GG05,ZTWY12}. Here we consider a relaxation of these two types of functions: if $N_0(a) = \lambda$ for all nonzero $a \in A$, where $\lambda$ is a nonnegative integer, the function $f$ is called an $(n, \ell, \lambda)$-{\em zero-difference balanced} (ZDB) function.  

Zero-difference balanced (ZDB) functions were first defined by Ding~\cite{Ding09}, and since then have found many applications: they can be used to construct optimal and perfect difference systems of sets~\cite{Ding09,ZTWY12}, optimal constant composition codes~\cite{DY051,DY05,Ding08}, etc. For the background of difference systems of sets, we refer to~\cite{Ding09,Lev71,Lev04,Wang06}, and for more information on constant composition codes, see~\cite{Ding08,DY05,Luo03}. In design theory, ZDB functions correspond to partitioned difference families.

Let $(A, +)$ be an abelian group of order $n$. Let $\calp$ be a collection of $\ell$ subsets ({\em blocks}) $\calb_0, \calb_1, \ldots, \calb_{\ell-1}$ of $A$. The collection $\calp$ is said to be an $(n, K, \lambda)$-{\em difference family} (DF) in $A$, where $K = \{*\ |\calb_i|: 0 \le i < \ell \ *\}$, if for $0 \leq i < \ell$, the list of differences $b - b'$, with $b, b' \in \calb_i$ and $b \ne b'$, covers all nonzero elements in $A$ exactly $\lambda$ times. Furthermore, if $\calp$ forms a partition of $A$, it is called an $(n, K, \lambda)$-{\em partitioned difference family} (PDF). Clearly, ZDB functions and PDFs are basically two equivalent objects.

\begin{proposition}\label{pro-zdb-pdf}
  Let $(A, +)$ and $(B, +)$ be two abelian groups of orders $n$ and $\ell$, respectively, where $B = \{ b_0, b_1, \ldots, b_{\ell-1} \}$. Let $f$ be a function from $A$ onto $B$. Define $\calb_i := \{ x \in A: f(x) = b_i\}$ for $0 \leq i < \ell $, and $\calp = \{ \calb_0, \calb_1, \ldots, \calb_{\ell-1} \}$. Then $f$ is an $(n, \ell, \lambda)$-ZDB function if and only if $\calp$ is an $(n, K, \lambda)$-PDF, where $K = \{*\ | \calb_i |: 0 \leq i < \ell \ * \}$.
\end{proposition}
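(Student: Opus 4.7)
The plan is to unpack both definitions in terms of the preimage sets $\calb_i = f^{-1}(b_i)$ and show that the ZDB counting condition on $N_0(a)$ is literally the same count as the difference-list multiplicity appearing in the PDF definition.

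First I would observe that $\calp$ is automatically a partition of $A$: since $f$ is a function, the sets $\calb_i$ are pairwise disjoint, and since $f$ is onto $B$, their union is all of $A$. So the only nontrivial content of the equivalence is matching the two arithmetic conditions.

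Next, for any fixed nonzero $a \in A$, I would rewrite
$$
N_0(a) \;=\; |\{x \in A : f(x+a) = f(x)\}| \;=\; \sum_{i=0}^{\ell-1} |\{x \in \calb_i : x + a \in \calb_i\}|,
$$
using that $f(x+a)=f(x)$ is equivalent to $x$ and $x+a$ belonging to the same preimage block. Reindexing the $i$th summand by the pair $(b,b') := (x+a,x) \in \calb_i \times \calb_i$ with $b - b' = a$ (and automatically $b \neq b'$ since $a \neq 0$), this counts exactly the number of ordered pairs inside $\calb_i$ whose difference equals $a$. Summing over $i$,
$$
N_0(a) \;=\; \sum_{i=0}^{\ell-1} |\{(b,b') \in \calb_i \times \calb_i : b \neq b',\ b - b' = a\}|,
$$
which is the multiplicity of $a$ in the combined difference list of the blocks.

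From this identity, both directions follow in one line: $f$ is $(n,\ell,\lambda)$-ZDB iff $N_0(a) = \lambda$ for every nonzero $a$, iff every nonzero element of $A$ is covered exactly $\lambda$ times by the block differences, iff $\calp$ is an $(n,K,\lambda)$-PDF with $K = \{*\,|\calb_i| : 0 \le i < \ell\,*\}$. There is no real obstacle here; the only point requiring care is ensuring that the partition property and the block-size multiset $K$ are read off correctly from the definition of $f$, which is immediate from surjectivity and the definition of $\calb_i$.
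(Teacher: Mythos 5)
Your proof is correct and is exactly the definitional unpacking the paper has in mind (the paper states this proposition without proof, treating it as immediate): the partition property comes from $f$ being a surjective function, and the identity $N_0(a) = \sum_{i}\bigl|\{(b,b')\in\calb_i\times\calb_i : b\neq b',\ b-b'=a\}\bigr|$ matches the ZDB condition to the difference-list multiplicity. Nothing is missing.
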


Recently, Zhou, Tang, Wu and Yang~\cite{ZTWY12} constructed some new classes of ZDB functions from difference balanced functions, and then presented several applications. For more information on ZDB functions, we also refer to a recent survey~\cite{DT12}. In this paper, we are mainly concerned with new classes of single ZDB functions, new sets of ZDB functions, and applications of sets of ZDB functions. The remainder of the present paper is organized as follows. In
Section~\ref{sec-char}, we present two results to characterize ZDB functions. We then propose a generic construction of ZDB functions in Section~\ref{sec-const}, which can give many new classes of ZDB functions. In Section~\ref{sec-exd}, we extend this generic construction naturally to construct a set of ZDB functions, in which any two ZDB functions are related uniformly. In Section~\ref{sec-app}, we give two applications of such sets of ZDB functions. We then conclude this paper with some open problems in Section~\ref{sec-con}.

Throughout this paper, if not stated otherwise, we use the following notations:
\begin{enumerate}
  \item[--] $q$ is a prime power.
  \item[--] $m$ is a positive integer.
  \item[--] $\theta$ is a primitive element of $\gf_{q^m}$.
  \item[--] $\bZ_n = \{0, 1, 2, \ldots, n -1 \}$ associated with the integer addition modulo $n$ and integer multiplication modulo $n$ operations.
  \item[--] $\tr$ denotes the trace function from $\gf_{q^m}$ to $\gf_q$.
  \item[--] $\lceil x \rceil$ denotes the ceiling function, and $\lfloor x \rfloor$ is the floor function.
\end{enumerate}

%More precisely, we first present a generic construction of ZDB functions, and many new classes of ZDB functions can be generated. We then extend this construction in a natural way to  a set of ZDB functions such that any two ZDB functions in this set are also related nicely. Furthermore, we give some applications of such a set of ZDB functions.

\section{Characterizations of ZDB functions}\label{sec-char}

In this section, to characterize ZDB functions, we give two results: a lower bound on the parameter $\lambda$ of ZDB functions, and general bounds on the size of preimage sets of ZDB functions.

\subsection{A lower bound on $\lambda$}\label{subsec-lowerbound}

Let $(A, +)$ and $(B,+)$ be two abelian groups of orders $n$ and $\ell$, respectively, where $B = \{b_0, b_1, \ldots, b_{\ell - 1}\}$. Suppose that $f$ is an $(n, \ell, \lambda)$-ZDB function from $A$ onto $B$. To characterize ZDB functions, we have the following result directly from the definition of PDF and Proposition~\ref{pro-zdb-pdf}.

\begin{lemma}\label{lem-zdbcha}
  Define 
  $\calb_i := \{ x \in A: f(x) = b_i\}$ for $0 \leq i < \ell$. Then 
  \begin{equation*}
    \left\{ \begin{array}{l}
      \sum_{i=0}^{\ell - 1} \tau_i = n, \\
      \sum_{i=0}^{\ell - 1} \tau_i^2 = n + \lambda (n-1),
    \end{array} \right.
  \end{equation*}
  where $\tau_i = |\calb_i|$ for $0 \leq i < \ell$.
\end{lemma}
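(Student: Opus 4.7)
The plan is to prove both identities by double counting ordered pairs $(x,y) \in A \times A$ with $f(x) = f(y)$, equivalently pairs lying in a common block $\calb_i$.

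For the first identity, I would simply invoke Proposition~\ref{pro-zdb-pdf}: since $f$ maps $A$ onto $B$, the preimage sets $\calb_0, \calb_1, \ldots, \calb_{\ell-1}$ form a partition of $A$, whence $\sum_{i=0}^{\ell-1} |\calb_i| = \sum_{i=0}^{\ell-1} \tau_i = n$.

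For the second identity, I would count the set $S := \{(x,y) \in A \times A : f(x) = f(y)\}$ in two ways. Partitioning $S$ by the common image $b_i$, each block $\calb_i$ contributes $\tau_i^2$ ordered pairs, so $|S| = \sum_{i=0}^{\ell-1} \tau_i^2$. Alternatively, I would split $S$ according to the difference $a := x - y \in A$. The diagonal contribution from $a = 0$ is exactly $n$. For each fixed nonzero $a \in A$, the number of $y \in A$ with $f(y+a) - f(y) = 0$ is $N_0(a) = \lambda$ by the ZDB hypothesis, contributing $\lambda(n-1)$ in total. Combining yields $\sum_{i=0}^{\ell-1} \tau_i^2 = n + \lambda(n-1)$.

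There is no real obstacle here — the statement is essentially the definitional identity of a PDF translated via Proposition~\ref{pro-zdb-pdf}, and the only thing to be careful about is that the second count uses $N_0(a) = \lambda$ uniformly over all nonzero $a$, which is exactly the ZDB condition. No structural properties of the groups beyond commutativity and the existence of the subtraction map are needed.
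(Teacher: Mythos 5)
Your proof is correct and matches the paper's (implicit) argument: the paper derives the lemma ``directly from the definition of PDF and Proposition~\ref{pro-zdb-pdf}'', which amounts to exactly your double count --- the blocks partition $A$, and the within-block differences, counted once as $\sum_{i}\tau_i(\tau_i-1)$ and once as $\sum_{a\neq 0}N_0(a)=\lambda(n-1)$, give the second identity after adding the diagonal term $n$.
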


Based on the two equations above, we have the following lower bound on $\lambda$. 

\begin{lemma}\label{lem-lowbound}
For any $(n, \ell, \lambda)$-ZDB function $f$ from $A$ onto $B$, we have
\begin{equation}\label{eqn-lowbound}
  \lambda \geq \left\lceil \frac{(n - \epsilon) (n + \epsilon - \ell)}{\ell (n - 1)} \right\rceil , 
\end{equation}
where $n = k \ell + \epsilon$ with $0 \leq \epsilon < \ell$. In particular, 
$$
  \lambda = \frac{(n - \epsilon) (n + \epsilon - \ell)}{\ell (n - 1)} 
$$
if and only if, for $0 \leq i < \ell $, $\tau_i = k$ for $\ell - \epsilon$ times and $\tau_i = k+1$ for the other $\epsilon$ times.
\end{lemma}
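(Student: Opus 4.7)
The plan is to treat the desired inequality as an integer optimization problem driven by Lemma \ref{lem-zdbcha}. Subtracting the first identity from the second gives
$$
\lambda (n-1) = \sum_{i=0}^{\ell-1} \tau_i^2 - n,
$$
so a lower bound on $\lambda$ reduces to a lower bound on $\sum_i \tau_i^2$, taken over all ways of writing $n = \tau_0 + \tau_1 + \cdots + \tau_{\ell-1}$ with each $\tau_i$ a positive integer (positivity coming from $f$ being onto $B$, which also forces $n \geq \ell$).

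The central step is to show that this integer minimum is realized precisely when the $\tau_i$ are as balanced as possible, namely $\ell - \epsilon$ of them equal to $k$ and $\epsilon$ of them equal to $k+1$. I would argue this by a standard flattening swap: whenever two block sizes satisfy $\tau_i > \tau_j + 1$, replacing them by $\tau_i - 1$ and $\tau_j + 1$ leaves the sum unchanged and a direct expansion shows $\sum_i \tau_i^2$ strictly decreases by $2(\tau_i - \tau_j - 1) \geq 2$. Iterating until no two sizes differ by more than $1$ yields (uniquely up to reordering) the configuration with $\ell - \epsilon$ copies of $k$ and $\epsilon$ copies of $k+1$, so at the minimum
$$
\sum_{i=0}^{\ell-1} \tau_i^2 = (\ell - \epsilon) k^2 + \epsilon (k+1)^2.
$$

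Next, I would expand the right-hand side using $n = k\ell + \epsilon$ to simplify it to $\frac{1}{\ell}\bigl(n^2 + \epsilon(\ell - \epsilon)\bigr)$, and substitute into the identity for $\lambda(n-1)$ to obtain the rational lower bound
$$
\lambda \geq \frac{(n-\epsilon)(n+\epsilon-\ell)}{\ell (n-1)}.
$$
Since $\lambda$ must be a nonnegative integer, the ceiling in \eqref{eqn-lowbound} follows at once. The equality characterization is then automatic: the displayed rational value is attained exactly when $\sum \tau_i^2$ reaches its minimum, which by the flattening argument happens iff the $\tau_i$ take only the values $k$ and $k+1$, with the stated multiplicities.

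The algebraic simplifications are all routine; the only genuine content is the swap argument certifying the balanced distribution as the integer minimizer of $\sum \tau_i^2$. Since this is a standard convexity-type fact about integer partitions, I do not anticipate any serious obstacle.
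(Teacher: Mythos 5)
Your proposal is correct and follows essentially the same route as the paper: reduce via Lemma \ref{lem-zdbcha} to minimizing $\sum_i \tau_i^2$ subject to $\sum_i \tau_i = n$, identify the balanced configuration ($\ell-\epsilon$ copies of $k$ and $\epsilon$ copies of $k+1$) as the unique integer minimizer, and simplify. The only difference is that you make explicit the flattening-swap argument and the algebraic simplification, whereas the paper simply invokes ``integral programming''; your write-up is a more detailed version of the same proof.
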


%The proof is similar to that of the lower bound on frequency hopping sequences in~\cite[Lemma 4]{LG74}. For the sake of completeness, we still present the proof as follows.

\begin{proof}
  By Lemma~\ref{lem-zdbcha}, we have
  \begin{equation*}
    \lambda \geq \frac{1}{n-1} \left( \min \sum_{i=0}^{\ell - 1} \tau_i^2 - n\right).
  \end{equation*}
 
Note that $\sum_{i=0}^{\ell - 1} \tau_i = n$. By integral programming, $\{\tau_0, \tau_1, \ldots, \tau_{\ell - 1} \}$ attains the minimum value if and only if $f$ is as balanced as possible. Since $n = k \ell + \epsilon$, if and only if $\tau_i = k$ for $\ell - \epsilon$ times and $\tau_i = k +1$ for the other $\epsilon$ times, we obtain the lower bound of $\lambda$ as stated. 
 \end{proof}

%\begin{proof}
  %By Lemma~\ref{lem-zdbcha}, we have
  %\begin{equation*}
    %\lambda = \frac{1}{n-1} \left(\sum_{i=0}^{\ell - 1} \tau_i^2 - n\right).
  %\end{equation*}
  %Let $\mu = \min{ \sum_{i=0}^{\ell - 1} \tau_i^2}$, then we have
%\begin{equation}\label{eqn-zdbproof1}
%\lambda \geq \frac{1}{n-1} (\mu - n).
%\end{equation}
%Note that $\sum_{i=0}^{\ell - 1} \tau_i = n$. By integral programming, $\{\tau_0, \tau_1, \ldots, \tau_{\ell - 1} \}$ attains the minimum value if and only if $f$ is as balanced as possible. Since $n = k \ell + \epsilon$, if and only if $\tau_i = k$ for $\ell - \epsilon$ times and $\tau_i = k +1$ for the other $\epsilon$ times, we obtain the minimum value 
  %\begin{eqnarray}\label{eqn-zdbproof2}
    %\mu & = & (\ell - \epsilon) k^2 + \epsilon (k+1)^2 \nonumber \\
    %& = & \frac{(n+\epsilon)(n-\epsilon) + \epsilon \ell}{\ell} .
  %\end{eqnarray}
  %From (\ref{eqn-zdbproof1}) and (\ref{eqn-zdbproof2}), the conclusion follows.
%\end{proof}

\begin{remark}\label{rmk-lowbound}
  Since the bound of (\ref{eqn-lowbound}) coincides with the bound on frequency hopping sequences in~\cite[Lemma 4]{LG74} (see also Lemma~\ref{lem-fhsbound}), ZDB functions meeting the lower bound of (\ref{eqn-lowbound}) can be used to define optimal frequency hopping sequences (e.g., see~\cite{DMY07,DY08,FMM04,GFM06,GMY09}). Furthermore, by~\cite[Proposition 3]{Ding08} and~\cite[Lemma 6]{ZTWY12}, if there exists an $(n, \ell, \lambda)$-ZDB function achieving the bound of (\ref{eqn-lowbound}), the corresponding constant composition codes and difference systems of sets are both optimal.
\end{remark}

%By the proof of Lemma~\ref{lem-lowbound}, the following corollary is immediate, and can be used to determine the size of preimage sets of certain ZDB functions.

%\begin{corollary}\label{coro-lowbound}
  %Suppose that $n > \ell$ and $f$ is an $(n, \ell, \lambda)$-ZDB function from $A$ onto $B$. Then 
%$$
  %\lambda = \frac{(n - \epsilon) (n + \epsilon - \ell)}{\ell (n - 1)} 
%$$
%if and only if, for $0 \leq i \leq \ell - 1$, $\tau_i = k$ for $\ell - \epsilon$ times and $\tau_i = k+1$ for the other $\epsilon$ times, where $n = k \ell + \epsilon$ with $0 \leq \epsilon < \ell$. 
%\end{corollary}

%By Corollary~\ref{coro-lowbound}, for a ZDB function $f$ with parameters $(n, \ell, \lambda)$, if 
%$$
  %\lambda = \frac{(n - \epsilon) (n + \epsilon - \ell)}{\ell (n - 1)},
%$$
%where $n = k \ell + \epsilon$ with $0 \leq \epsilon < \ell$, then there are exactly $\ell - \epsilon$ preimage sets of size $k$ and the left $\epsilon$ preimage sets of size $k + 1$.

\subsection{General bounds on the size of preimage sets}\label{subsec-pisbound}

Using Lemma~\ref{lem-lowbound}, we can explicitly determine the size of preimage sets of an $(n, \ell, \lambda)$-ZDB function for a specific $\lambda$ prescribed as in Lemma~\ref{lem-lowbound}. Now we give general bounds on the size of preimage sets of ZDB functions. The sizes of all preimage sets constitute the parameter $K$ in the corresponding PDF, and are also important in applications.

\begin{lemma}\label{lem-pisbound}
Suppose that $f$ is an $(n, \ell, \lambda)$-ZDB function from $(A,+)$ onto $(B,+)$. For each $0 \leq i < \ell$, we have
\begin{equation}\label{eqn-pisbound}
  \frac{n - \sqrt{\Delta}}{\ell} \leq \tau_i \leq \frac{n + \sqrt{\Delta}}{\ell}  , 
\end{equation}
where $\Delta = (n + \lambda n - \lambda) \ell^2 - ( n^2 + n + \lambda n - \lambda) \ell + n^2$. In particular, 
\begin{itemize}
  \item if $\displaystyle\lambda = \frac{n}{\ell}$, we have $\displaystyle\frac{n - (\ell - 1)\sqrt{n}}{\ell} \leq \tau_i \leq \frac{n + (\ell - 1)\sqrt{n}}{\ell}$ ; 
  \item if $\displaystyle\lambda = \frac{n+1}{\ell} - 1$, we have
    $\displaystyle\frac{n - \ell + 1}{\ell} \leq \tau_i \leq \frac{n + \ell - 1}{\ell}$.
\end{itemize}
\end{lemma}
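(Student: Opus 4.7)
The plan is to fix an index $i$, isolate the contribution of $\tau_i$ from the two identities of Lemma~\ref{lem-zdbcha}, and then apply the Cauchy--Schwarz inequality to the remaining $\ell-1$ preimage sizes to produce a quadratic inequality in $\tau_i$. Concretely, rewrite the identities as
\begin{equation*}
  \sum_{j \ne i} \tau_j = n - \tau_i, \qquad \sum_{j \ne i} \tau_j^2 = n + \lambda(n-1) - \tau_i^2.
\end{equation*}
By Cauchy--Schwarz, $\bigl(\sum_{j \ne i} \tau_j\bigr)^{2} \le (\ell - 1) \sum_{j \ne i} \tau_j^{2}$, so
\begin{equation*}
  (n - \tau_i)^{2} \le (\ell - 1)\bigl(n + \lambda(n-1) - \tau_i^{2}\bigr).
\end{equation*}

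Rearranging puts this in the form $\ell\, \tau_i^{2} - 2n\,\tau_i + n^{2} - (\ell - 1)\bigl(n + \lambda(n-1)\bigr) \le 0$, a quadratic in $\tau_i$ with positive leading coefficient, so $\tau_i$ lies between its two roots. The quadratic formula yields
\begin{equation*}
  \frac{n - \sqrt{\Delta}}{\ell} \le \tau_i \le \frac{n + \sqrt{\Delta}}{\ell},
\end{equation*}
where $\Delta = n^{2} - \ell\bigl(n^{2} - (\ell-1)(n + \lambda(n-1))\bigr)$. It will remain to expand and regroup this expression to match the form $\Delta = (n + \lambda n - \lambda)\ell^{2} - (n^{2} + n + \lambda n - \lambda)\ell + n^{2}$ stated in the lemma. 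This is pure algebra and is the only step that needs bookkeeping.

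For the two special cases, the plan is simply to substitute the given value of $\lambda$ into the closed form of $\Delta$ and simplify. When $\lambda = n/\ell$, the coefficients collapse to give $\Delta = n(\ell - 1)^{2}$, so $\sqrt{\Delta} = (\ell-1)\sqrt{n}$; when $\lambda = (n+1)/\ell - 1$, the expression telescopes to $\Delta = (\ell - 1)^{2}$, so $\sqrt{\Delta} = \ell - 1$. Plugging these into the general bound yields the two displayed inequalities.

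The only real obstacle is the algebraic simplification of $\Delta$ into the stated symmetric form, and the verification that the two special substitutions collapse $\Delta$ into perfect squares; both are routine but should be carried out carefully. Notice also that equality in Cauchy--Schwarz would require all $\tau_j$ ($j \ne i$) to be equal, which is exactly the balanced regime of Lemma~\ref{lem-lowbound}; this is a useful sanity check but need not appear in the written proof.
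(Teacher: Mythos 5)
Your proof is correct and follows essentially the same route as the paper: the Cauchy--Schwarz step $\bigl(\sum_{j\ne i}\tau_j\bigr)^2 \le (\ell-1)\sum_{j\ne i}\tau_j^2$ is precisely the inequality the paper derives by hand from $\sum_{j\ne k}(\tau_j-\tau_k)^2 \ge 0$, and both arguments then reduce to the same quadratic inequality $\ell\tau_i^2 - 2n\tau_i + n^2 - (\ell-1)\bigl(n+\lambda(n-1)\bigr) \le 0$. The deferred algebra checks out ($\Delta$ matches the stated form, and the two substitutions give $\Delta = n(\ell-1)^2$ and $\Delta = (\ell-1)^2$ respectively), so nothing is missing.
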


\begin{proof}
Without loss of generality, it suffices to prove the bound for $\tau_0$. Note that
\begin{eqnarray*}
  0 & \leq & \sum_{\begin{subarray}{c} 1 \leq i, j < \ell \\ i \ne j \end{subarray}} (\tau_i - \tau_j)^2 \\
    & = & \sum_{\begin{subarray}{c} 1 \leq i, j < \ell \\ i \ne j \end{subarray}} (\tau_i^2 + \tau_j^2 - 2\tau_i \tau_j) \\
      & = & 2(\ell - 2) \sum_{i=1}^{\ell-1} \tau_i^2 - 2 \sum_{\begin{subarray}{c} 1 \leq i, j < \ell \\ i \ne j  \end{subarray}} \tau_i \tau_j .
      \end{eqnarray*}
It then follows that
\begin{equation}\label{eqn-genbound1}
  (\ell - 2) \sum_{i=1}^{\ell - 1} \tau_i^2 \geq \sum_{\begin{subarray}{c} 1 \leq i, j < \ell \\ i \ne j \end{subarray}} \tau_i \tau_j .
  \end{equation}
  By Lemma~\ref{lem-zdbcha}, we have
  \begin{eqnarray}
    \lefteqn{n + \lambda (n - 1) } \nonumber \\
    & = & \sum_{i=0}^{\ell  - 1} \tau_i^2 - \tau_0^2 + \tau_0^2 \nonumber \\
    & = & \sum_{i=1}^{\ell - 1} \tau_i^2 + \left( n - \sum_{i=0}^{\ell - 1} \tau_i + \tau_0 \right)^2 \nonumber \\
    & = & \sum_{i=1}^{\ell - 1} \tau_i^2 + \left(n - \sum_{i=1}^{\ell - 1} \tau_i \right)^2 \nonumber \\
    & = & 2 \sum_{i=1}^{\ell - 1} \tau_i^2 + n^2 - 2n \sum_{i=1}^{\ell - 1} \tau_i + \sum_{\begin{subarray}{c} 1 \leq i, j < \ell \\ i \ne j \end{subarray}} \tau_i \tau_j . \label{eqn-genbound2}
    \end{eqnarray}
    With (\ref{eqn-genbound1}) and (\ref{eqn-genbound2}), we have
    $$
    \ell \sum_{i=1}^{\ell - 1} \tau_i^2 - 2n \sum_{i=1}^{\ell - 1} \tau_i + n^2 \geq n + \lambda (n - 1).
    $$
    Applying Lemma~\ref{lem-zdbcha}, we obtain
    $$
    \ell (n + \lambda (n-1) - \tau_0^2) - 2n (n - \tau_0) + n^2 \ge n +  \lambda(n-1).
    $$
    It then follows that
    $$
    (\tau_0 - \frac{n}{\ell})^2 \leq \frac{\Delta}{\ell^2},
    $$
    where $\Delta = (n + \lambda n - \lambda) \ell^2 - (n^2 + n + \lambda n - \lambda) \ell + n^2$, which completes the proof.
\end{proof}

\begin{remark}\label{rmk-pisbound}
  The two special cases in Lemma~\ref{lem-pisbound} correspond to perfect nonlinear functions and difference balanced functions, respectively. For the case of perfect nonlinear functions, the bounds were also given in~\cite{CDY05}.  
\end{remark}

\section{A generic construction of ZDB functions}\label{sec-const}

In this section, we describe a generic construction of ZDB functions, and present two special cases of this construction. 

%Furthermore, we propose a vectorial construction of ZDB functions with flexible parameters. 

\subsection{The construction}

To present the construction of ZDB functions, we need the following results.

\begin{lemma}\label{lem-pre}
Let $e = l \cdot r$ be a divisor of $q - 1$ with $\gcd(e,m) = 1$. Define $D_0 := \langle \theta^r \rangle$, $C_0 := \langle \theta^e \rangle$ and $\alpha = \theta^{\frac{q^m-1}{q-1}}$. Then 
 $$
 \gf_{q^m}^* = \dot\bigcup_{i=0}^{r-1} D_i ,
  $$
  and
  $$
  D_0 = \dot\bigcup_{i=0}^{l-1} C_i ,
  $$
  where $ D_i = \alpha^i D_0 $ for $0 \leq i < r$, $ C_i =  \alpha^{ir} C_0$ for $0 \leq i < l$, and $\dot\bigcup$ denotes the disjoint union. 
\end{lemma}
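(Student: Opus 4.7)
The plan is to verify both decompositions by identifying, in each case, that the alleged coset representatives $\alpha^i$ (respectively $\alpha^{ir}$) run through a complete set of coset representatives of the relevant cyclic quotient group. Since everything happens inside the cyclic group $\gf_{q^m}^*$, this reduces to a straightforward index calculation, together with the observation that $\gcd(e,m)=1$ combined with $r\mid e$ and $l\mid e$ forces $\gcd(r,m)=\gcd(l,m)=1$.

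For the first decomposition, I would begin by noting that since $r\mid e\mid q-1\mid q^m-1$, the set $D_0=\langle\theta^r\rangle$ is a subgroup of $\gf_{q^m}^*$ of index $r$, and the quotient $\gf_{q^m}^*/D_0$ is cyclic of order $r$, generated by $\theta D_0$. Thus the cosets $\alpha^iD_0$ for $0\le i<r$ partition $\gf_{q^m}^*$ as soon as the image of $\alpha$ in this quotient generates it. Writing $\alpha=\theta^{(q^m-1)/(q-1)}$, its image corresponds to $(q^m-1)/(q-1)\bmod r$. Since $r\mid q-1$, we have $q\equiv 1\pmod{r}$, hence
\[
\frac{q^m-1}{q-1}=1+q+\cdots+q^{m-1}\equiv m\pmod{r}.
\]
From $r\mid e$ and $\gcd(e,m)=1$ we get $\gcd(r,m)=1$, so $\alpha D_0$ has order $r$ in $\gf_{q^m}^*/D_0$, which proves that the cosets $\alpha^i D_0$ ($0\le i<r$) are distinct and cover $\gf_{q^m}^*$.

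The second decomposition is handled identically, one level deeper. Since $e=lr\mid q^m-1$, the subgroup $C_0=\langle\theta^e\rangle$ sits inside $D_0=\langle\theta^r\rangle$ with index $l$, and the quotient $D_0/C_0$ is cyclic of order $l$, generated by $\theta^r C_0$. The element $\alpha^r=\theta^{r(q^m-1)/(q-1)}$ lies in $D_0$, and its image in $D_0/C_0$ corresponds to $(q^m-1)/(q-1)\bmod l$. Again $l\mid q-1$ gives $q\equiv 1\pmod{l}$ and hence $(q^m-1)/(q-1)\equiv m\pmod{l}$. From $l\mid e$ and $\gcd(e,m)=1$ we deduce $\gcd(l,m)=1$, so $\alpha^rC_0$ generates $D_0/C_0$, and the $l$ cosets $\alpha^{ir}C_0$ partition $D_0$.

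I do not expect any serious obstacle here: the only subtlety is recognising that the hypothesis $\gcd(e,m)=1$ is precisely what makes the Frobenius-like sum $1+q+\cdots+q^{m-1}$ coprime to $r$ and to $l$ modulo those divisors. Once that number-theoretic observation is isolated, the rest is bookkeeping with subgroups of a cyclic group.
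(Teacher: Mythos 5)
Your proof is correct and rests on the same key computation as the paper's, namely that $q\equiv 1$ modulo the relevant divisor of $q-1$ forces $\frac{q^m-1}{q-1}=1+q+\cdots+q^{m-1}\equiv m$, which is then a unit because $\gcd(e,m)=1$. The paper packages this as a proof by contradiction that $\alpha^{jr}\notin C_0$ for $0<j<l$ (treating the first decomposition as a special case of the second), whereas you phrase it as the image of $\alpha$ generating the cyclic quotient group; the two arguments are interchangeable.
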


\begin{proof}
  Since the first assertion is a special case of the second one, we only need to prove the second assertion. Note that $\alpha = \theta^{\frac{q^m-1}{q-1}}$ is a primitive element of $\gf_q$. Since $|D_0| = l \cdot |C_0|$, it suffices to prove that $\alpha^{ir} \not \in C_0$ for all $i = 1, \ldots, l-1$. Assume to the contrary that there exists some $j$ such that $\alpha^{jr} \in C_0$, we then have $\alpha^{j r \cdot \frac{q^m - 1}{e}} = 1$, which means 
  $$
  jr \cdot \frac{q^m-1}{e} \equiv 0 \pmod{(q-1)}.
  $$
  It follows that
  $$
  jr \cdot \frac{q^m - 1}{q-1} \equiv 0 \pmod{e}.
  $$
  Since $e$ is a divisor of $q-1$, we have $q \equiv 1 \pmod{e}$. Thus,
  $$
  jr \cdot \frac{q^m - 1}{q-1} \equiv jr \cdot m \pmod{e}.
  $$
  We then obtain that $jr \cdot m  \equiv 0 \pmod{e}$, which implies that $e | jr$ since $\gcd(e,m) = 1$. This is a contradiction to the choice of $j$, i.e., $0 < j \le l -1$. Therefore, $\alpha^{ir} C_0$ for $i = 0, 1, \ldots, l-1$ are pairwise disjoint. The proof is then completed.  
\end{proof}

\begin{corollary}\label{coro-pre}
  With the same notations as in Lemma~\ref{lem-pre}, assume that $h$ is a $d$-homogeneous function on $\gf_{q^m}^*$ over $\gf_q$, i.e., for all $a \in \gf_q$ and $x \in \gf_{q^m}^*$, $h(ax) = a^d h(x)$. Then we have 
  $$
  \big|\{x\in D_0: h(x)=0\}\big| = l \cdot \big| \{x\in C_i: h(x)=0\} \big|, 
  $$
for each $i = 0, 1, \ldots, l -1$.
\end{corollary}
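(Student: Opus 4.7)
The plan is to exploit the homogeneity of $h$ together with the fact that the coset representatives $\alpha^{ir}$ from Lemma~\ref{lem-pre} all lie in $\gf_q^*$, since $\alpha = \theta^{(q^m-1)/(q-1)}$ is a primitive element of $\gf_q$. This will let us transport the zero set of $h$ in $C_0$ to the zero set in each $C_i$ by a scalar multiplication, giving equality of cardinalities.

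First, for any $i$ with $0 \le i \le l-1$, I would consider the map $\varphi_i \colon C_0 \to C_i$ defined by $\varphi_i(y) = \alpha^{ir} y$. By the definition $C_i = \alpha^{ir} C_0$ from Lemma~\ref{lem-pre}, $\varphi_i$ is a bijection. Using the $d$-homogeneity of $h$ together with $\alpha^{ir} \in \gf_q^*$, I compute
\[
h(\varphi_i(y)) = h(\alpha^{ir} y) = \alpha^{ird} h(y),
\]
so $h(\varphi_i(y)) = 0$ if and only if $h(y) = 0$. Hence $\varphi_i$ restricts to a bijection between $\{y \in C_0 : h(y) = 0\}$ and $\{x \in C_i : h(x) = 0\}$, yielding
\[
\bigl|\{x \in C_i : h(x) = 0\}\bigr| = \bigl|\{y \in C_0 : h(y) = 0\}\bigr|
\]
for every $i = 0, 1, \ldots, l-1$.

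Next, I would invoke the disjoint union $D_0 = \dot\bigcup_{i=0}^{l-1} C_i$ from Lemma~\ref{lem-pre} to sum these counts, obtaining
\[
\bigl|\{x \in D_0 : h(x) = 0\}\bigr| = \sum_{i=0}^{l-1} \bigl|\{x \in C_i : h(x) = 0\}\bigr| = l \cdot \bigl|\{x \in C_0 : h(x) = 0\}\bigr|.
\]
Since the right-hand side is independent of the choice of representative, replacing $C_0$ by any $C_i$ gives the claim.

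There is no serious obstacle here: the statement is essentially a direct consequence of Lemma~\ref{lem-pre} once one observes that $\alpha^{ir} \in \gf_q^*$, which is exactly the hypothesis needed to apply $d$-homogeneity of $h$ and conclude that multiplication by $\alpha^{ir}$ preserves the vanishing locus of $h$. The only thing to be careful about is that the value of $d$ plays no role beyond ensuring that $\alpha^{ird} \ne 0$, which is automatic.
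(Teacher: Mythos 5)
Your proof is correct and follows essentially the same route as the paper: multiplication by $\alpha^{ir}\in\gf_q^*$ is a bijection $C_0\to C_i$ that, by $d$-homogeneity, preserves the vanishing locus of $h$, so the zeros in $D_0$ are equidistributed among the $l$ cosets. Your write-up is in fact slightly more explicit than the paper's, since you spell out that the correspondence is a bijection in both directions (using $\alpha^{ird}\ne 0$) rather than only noting that zeros of $h$ in $C_0$ map to zeros in $C_i$.
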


\begin{proof}
  Let $x_0\in C_0$ be a root of $h(x)=0$, then for each $0 \leq i < l$, $\alpha^{ir} x_0\in C_i$ is also a root of it, because
  $$
  h(\alpha^{ir} x_0)=\alpha^{ird}h(x_0)=0.
  $$
    
  Since by Lemma~\ref{lem-pre} $D_0 = \dot\bigcup_{i=0}^{l-1}C_i = \dot\bigcup_{i=0}^{l-1} \alpha^{ir} C_0$, all the solutions of $h(x)=0$ in $D_0$ are equally distributed into each of the $l$ cosets $C_i$'s. Thus, we have
    $$
    \big| \{x\in D_0: h(x)=0\}| = l \cdot |\{x\in C_i: h(x)=0\} \big|
    $$
    for each $i=0, 1, \ldots, l-1$.
\end{proof}

\begin{lemma}\label{lem-pre2}
  With the same notations as in Lemma~\ref{lem-pre}, let $u$ be a divisor of $q-1$ with $\gcd(u,m) = 1$. Define
    $$
    N_{a,i} := \big| \{ x \in C_i: \tr(ax^u) = 0 \} \big| ,
    $$
    then for each $a \in \gf_{q^m}^*$ and $ 0 \leq i < l$, we have
    $$
    N_{a,i} = \frac{q^{m-1} - 1}{l \cdot r}.
    $$
\end{lemma}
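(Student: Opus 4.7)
The plan is to reduce $N_{a,i}$ to a count over all of $\gf_{q^m}^*$ via two applications of Corollary~\ref{coro-pre} with specialised parameters, and then evaluate that count directly using the balance of the trace.

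First observe that for $c \in \gf_q$ and $x \in \gf_{q^m}^*$ one has $\tr(a(cx)^u) = \tr(a c^u x^u) = c^u \tr(a x^u)$, since $c^u \in \gf_q$ and $\tr$ is $\gf_q$-linear; hence $h(x) := \tr(ax^u)$ is $u$-homogeneous. I would invoke Lemma~\ref{lem-pre} in the specialised setting $\hat e = e, \hat l = e, \hat r = 1$ (valid because $e \mid q-1$ and $\gcd(e,m)=1$). Then $\hat D_0 = \gf_{q^m}^*$ and the cosets $\hat C_j = \alpha^j C_0$ for $j = 0, \ldots, e-1$ partition $\gf_{q^m}^*$; each original $C_i = \alpha^{ir}C_0$ coincides with $\hat C_{ir}$. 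Corollary~\ref{coro-pre} applied to $h$ then yields
$$
N_{a,i} \;=\; \frac{1}{e}\,\big|\{x \in \gf_{q^m}^* : \tr(a x^u) = 0\}\big|.
$$

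To evaluate the right-hand side, note that $u \mid q-1$ forces $u \mid q^m-1$, so the $u$-th power map is $u$-to-$1$ from $\gf_{q^m}^*$ onto the subgroup $\langle\theta^u\rangle$. Setting $y = x^u$ and then $z = ay$ converts the count to
$$
\big|\{x \in \gf_{q^m}^* : \tr(a x^u) = 0\}\big| \;=\; u \cdot \big|\{z \in a\langle\theta^u\rangle : \tr(z) = 0\}\big|.
$$
A second application of Lemma~\ref{lem-pre}/Corollary~\ref{coro-pre}, now with parameters $\tilde e = u, \tilde l = u, \tilde r = 1$ (permissible since $u \mid q-1$ and $\gcd(u,m)=1$), shows that the cosets $\tilde C_j = \alpha^j \langle\theta^u\rangle$ partition $\gf_{q^m}^*$; hence $a\langle\theta^u\rangle$ is one of them. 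As $\tr$ is $1$-homogeneous and $|\{z \in \gf_{q^m}^* : \tr(z) = 0\}| = q^{m-1}-1$, the corollary gives the coset count $(q^{m-1}-1)/u$. Assembling the pieces,
$$
N_{a,i} \;=\; \frac{1}{e}\cdot u \cdot \frac{q^{m-1}-1}{u} \;=\; \frac{q^{m-1}-1}{lr}.
$$

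The main obstacle is essentially bookkeeping: one has to confirm that the two specialised instances of Lemma~\ref{lem-pre} satisfy its hypotheses (the divisibility $\hat e\mid q-1$ and the coprimality $\gcd(\hat e,m)=1$, and similarly for $\tilde e = u$), and that the cosets produced in each instance indeed coincide with the intended sets $C_i$ and $a\langle\theta^u\rangle$. Both items follow directly from the standing assumptions; everything else reduces to the $u$-to-$1$ structure of $x\mapsto x^u$ on $\gf_{q^m}^*$ and the standard balance of the trace.
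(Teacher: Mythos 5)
Your proposal is correct and is essentially the paper's own argument: both reduce the count on $C_i$ to the count of $\tr(ax^u)=0$ on $\gf_{q^m}^*$ via the $u$-homogeneity of $x\mapsto\tr(ax^u)$ and Corollary~\ref{coro-pre}, and both evaluate that global count as $q^{m-1}-1$ using the $u$-to-$1$ power map together with the balance of the trace. The only (cosmetic) difference is that you collapse the paper's two-step descent $\gf_{q^m}^*\to D_0\to C_i$ into a single application of the corollary with parameters $\hat l=e$, $\hat r=1$.
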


\begin{proof}
  Since $\tr(ax)$ is a $1$-homogeneous function on $\gf_{q^m}^*$ over $\gf_q$ for each $a \in \gf_{q^m}^*$, by Corollary \ref{coro-pre},  we have 
    $$
    \big|\{x \in \langle \theta^u \rangle: \tr(a x)=0\} \big|=\frac{q^{m-1}-1}{u},
    $$
    which implies that
    $$
    \big| \{ 0 \leq j < \frac{q^m-1}{u} : \tr(a \theta^{uj}) = 0\} \big| = \frac{q^{m-1}-1}{u},  
    $$
    and further
    $$
    \big| \{x \in\gf_{q^m}^*: \tr(a x^u)=0\}\big| = q^{m-1}-1.
    $$

    Since $\tr(ax^u)$ is a $u$-homogeneous function on $\gf_{q^m}^*$ over $\gf_q$ for each $a \in \gf_{q^m}^*$, applying Corollary \ref{coro-pre} again, we have
    $$
    \big| \{x\in D_0: \tr(a x^u)=0\} \big| = \frac{q^{m-1}-1}{r}.
    $$
    Thus, 
  \begin{equation}\label{eqn-con1}
    N_{a ,i}:= \big|\{x\in C_i: \tr(a x^u)=0\} \big|=\frac{q^{m-1}-1}{l \cdot r},
  \end{equation}
  for each $a \in \gf_{q^m}^*$ and $0 \leq i < l$, which completes the proof.
\end{proof}

Now we are ready to present a generic construction of ZDB functions with parameters $\left( \frac{q^m - 1}{r}, q, \frac{q^{m-1}-1}{r} \right)$, where $r$ is a divisor of $q-1$ with $\gcd(r,m) = 1$.

\begin{theorem}\label{thm-const1}
  Let $e$ and $u$ be two divisors of $q-1$ with $\gcd(e, m)=\gcd(u,m) = 1$ and $e = l\cdot r$. Set $D_0 = \langle \theta^r \rangle$, $C_0 = \langle \theta^e \rangle$, and $\alpha = \theta^{\frac{q^m-1}{q-1}}$. Define the function $f: (\bZ_n, +) \rightarrow (\gf_q,+)$ by
  $$
  f(t) := \tr(\rho(t) \theta^{rut}),
  $$
  where $n = \frac{q^m-1}{r}$ and $\rho(t)$ is defined as
  $$
  \rho(t) := d_i, \ \textrm{ if $\theta^{rt} \in C_i$,}
  $$
  with $C_i = \alpha^{i r} C_0$ and $d_i \in \gf_{q^m}^*$ for $0 \leq i < l$ . If the following two conditions
  \begin{enumerate}
    \item[(i)] $\{x \in C_0 : x^u=1 \textrm{ and } x\neq 1\} = \emptyset$;
    \item[(ii)] $d_j / d_{k+j} \not \in C_{uk}$ for each $k \ne 0$ and $0 \le j < l$, where the subscripts $uk$ and $k+j$ are performed modulo $l$,
  \end{enumerate}
are satisfied, the function $f(t)$ is a $\left(\frac{q^m - 1}{r}, q, \frac{q^{m-1} - 1}{r}\right)$-ZDB function.
\end{theorem}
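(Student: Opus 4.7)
The plan is to translate the statement on $\bZ_n$ into a multiplicative statement on $D_0 = \langle\theta^r\rangle$, and then exploit the $u$-homogeneity already used in Lemma~\ref{lem-pre2}. Substitute $x = \theta^{rt}$, so that as $t$ ranges over $\bZ_n$, $x$ ranges over $D_0$, and $f(t)$ corresponds to $F(x) := \tr(d_j x^u)$ when $x \in C_j$. Fixing a nonzero $a \in \bZ_n$, set $y = \theta^{ra} \in D_0\setminus\{1\}$; then $f(t+a) = f(t)$ becomes $F(xy) = F(x)$. The quantity to evaluate is therefore
\[
N_0(a) = \bigl|\{x \in D_0 : F(xy) = F(x)\}\bigr|,
\]
and the goal is to show this equals $(q^{m-1}-1)/r$.

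Next I would split according to the coset $C_k$ containing $y$. If $x \in C_j$, then $xy \in C_{j+k}$ (indices mod $l$), so $F(xy) - F(x) = \tr\bigl((d_{j+k}y^u - d_j)\,x^u\bigr)$. The proof then reduces to showing that for every $j$ the coefficient $c_j := d_{j+k}y^u - d_j$ is nonzero, because Lemma~\ref{lem-pre2} will then hand over that the number of $x \in C_j$ with $\tr(c_j x^u) = 0$ equals $(q^{m-1}-1)/(lr)$, and summing over the $l$ cosets $C_0,\dots,C_{l-1}$ yields exactly $(q^{m-1}-1)/r$, as required.

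The heart of the argument is therefore checking nonvanishing of $c_j$ in the two regimes. When $k=0$, $y$ is a nontrivial element of $C_0$ and $c_j = d_j(y^u-1)$; condition~(i) rules out $y^u = 1$, so $c_j \neq 0$. When $k \neq 0$, $c_j = 0$ would force $y^u = d_j/d_{j+k}$, so I need to pin down the coset of $y^u$. Writing $y = \alpha^{kr}z$ with $z \in C_0$ gives $y^u = \alpha^{ukr}z^u \in \alpha^{ukr}C_0$, and reducing $uk$ modulo $l$ via the same disjointness argument as in Lemma~\ref{lem-pre} (which uses $\gcd(e,m)=1$) shows $y^u \in C_{uk \bmod l}$. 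Thus $c_j=0$ would force $d_j/d_{j+k} \in C_{uk}$, directly contradicting condition~(ii).

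The main obstacle, and the place to be careful, is this coset bookkeeping: correctly identifying $y^u \in C_{uk\bmod l}$ and matching indices so that the hypothesis in~(ii) lines up precisely with the equation $d_{j+k}y^u = d_j$. Once that identification is in place, the verification is essentially a clean application of Lemma~\ref{lem-pre2} coset by coset, with no further arithmetic obstacles.
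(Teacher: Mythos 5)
Your proposal is correct and follows essentially the same route as the paper's proof: decompose $D_0$ into the cosets $C_j$, observe that the difference becomes $\tr((d_{k+j}\theta^{rau}-d_j)x^u)$ on each coset, use conditions (i) and (ii) to rule out a vanishing coefficient in the cases $k=0$ and $k\ne 0$ respectively, and then apply Lemma~\ref{lem-pre2} coset by coset. The only cosmetic difference is your explicit multiplicative substitution $x=\theta^{rt}$, $y=\theta^{ra}$ and the slightly more careful justification that $y^u\in C_{uk\bmod l}$, which the paper simply asserts.
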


\begin{proof}
  By definition, we need to prove 
  $$
  N_0(a) = \big| \{ t \in \bZ_n: f(t+a) - f(t) = 0 \} \big| = \frac{q^{m-1}-1}{r}
  $$
  for each nonzero $a \in \bZ_n$. To this end, without loss of generality, assume that $\theta^{ r a } \in C_k$ for some $0 \le k < l$. By Lemma~\ref{lem-pre}, we then have
  \begin{eqnarray*}
    \lefteqn{ \big| \{ t \in \bZ_n : f(t + a) - f(t) = 0 \} \big|} \\
    & = & \big| \{ t \in \bZ_n: \tr \left( ( \rho(t+a) \theta^{ra u} - \rho(t) ) \theta^{rut} \right) = 0 \} \big| \\
    & = & \sum_{j=0}^{l-1} \big| \{ x \in C_j: \tr \left( ( d_{k+j} \theta^{rau} - d_j ) x^u \right) = 0 \} \big| .
  \end{eqnarray*}

  On one hand, if $k = 0$, i.e., $\theta^{ra} \in C_0$, since $\{ x \in C_0: x^u=1 \textrm{ and } x\neq 1 \} = \emptyset$, we have $d_j \theta^{rau} - d_j \ne 0$ for each nonzero $a \in \bZ_n$ and each $0 \le j < l$. On the other hand, if $k \ne 0$, we have $\theta^{rau} \in C_{uk}$, where $uk \not \equiv 0 \bmod{l}$. Since $d_j / d_{k+j} \not \in C_{uk}$ for $0 \leq j < l$, we also have $d_{k+j} \theta^{rau} - d_j \ne 0$ for each nonzero $a \in \bZ_n$ and each $0 \le j < l$. Thus, from Lemma~\ref{lem-pre2}, it follows that   
  \begin{eqnarray*}
   \lefteqn{ \big| \{ t \in \bZ_n : f(t + a) - f(t) = 0 \} \big|} \\
   & = & \sum_{j = 0}^{l-1} N_{d_{k+j} \theta^{rau} - d_j, j} \\ 
   & = & \frac{q^{m-1} - 1}{r} .
  \end{eqnarray*}
  The proof is then completed.
\end{proof}

In Theorem~\ref{thm-const1}, we presented the ZDB function $f$ from $(\bZ_n, +)$ onto $(\gf_q, +)$. Since $D_0 \cong (\bZ_n,+)$ where $n = \frac{q^m-1}{r}$, in the sequel sometimes we use the multiplicative group $D_0$ instead of $(\bZ_n, +)$. We hope that this would not bring any confusion.

\begin{remark}\label{rmk-const1}
  The two sufficient conditions in Theorem~\ref{thm-const1} can be satisfied. 
      %The condition (i) is equivalent to that there does not exist $1 \leq j \leq s-1$, such that $e \big| \left(\lcm(r, \frac{q^m-1}{u}) \cdot j \right)$, where $s = (q^m - 1)/ \lcm(r, \frac{q^m-1}{u})$. 
      It is easily checked that the condition (i) is equivalent to that for all $1 \leq j < \frac{q^m-1}{e}$, the relation $j \cdot e \cdot u \not\equiv 0 \pmod{q^m-1}$ holds, of which $u = 1$ is a simple example. Thus, the condition (i) always holds by choosing suitable $e$, $u$ and $r$. By Lemma~\ref{lem-pre}, we have
      $$
      \gf_{q^m}^* = \dot\bigcup_{i=0}^{r-1} \alpha^i D_0 = \dot\bigcup_{i=0}^{lr - 1} \alpha^i C_0, 
      $$
      where $\alpha = \theta^{\frac{q^m-1}{q-1}}$. 
      If $d_j \in \alpha^{j_1}D_0$ and $d_{k+j} \in \alpha^{j_2}D_0$ with $0 \leq j_1 \ne j_2 \leq r-1$, the condition (ii) is always satisfied. We now consider two extreme cases: 
      \begin{itemize}
	\item suppose that $d_i \in D_0$ for each $0 \leq i < l$, i.e., $d_i \in \alpha^{-s_i r}C_0$ with $0 \leq s_i < l$. Then the condition (ii) is equivalent to
      $$
      -s_j+s_{k+j}\not \equiv uk \pmod{l},
      $$
      for all $k\neq 0$ and $0\le j < l$, which can be also written as $s_j-s_i\not \equiv u(j-i) \pmod{l}$, i.e.,
      $$(s_j-ju)-(s_i-iu)\not \equiv 0 \pmod{l},$$
      for all $j\neq i$ and $0\le i,j < l$. Hence the condition (ii) can be expressed as
      $$
      \{s_i-iu \pmod{l}: 0\le i < l \}=\{0,1,\cdots, l-1\},
      $$
      and there are totally $ l! |C_0|^l$ different $\rho(t)$'s satisfying this condition. 
    \item suppose that $l \ge r$. Let each of $r-1$ different $d_i$'s belong to each of $r-1$ different cyclotomic classes $D_i$'s. There are ${l \choose {r-1}}$ ways to do this. If $d_j, d_{k+j}$ don't belong to the same $D_i$, the condition (ii) is always satisfied. Thus, for these $r-1$ $d_i$'s, there are ${l \choose {r-1}} |D_0|^{r-1}$ possible choices. Now we only need to consider the remaining $l-r+1$ $d_i$'s, which belong to the rest one cyclotomic class $D_0$ without loss of generality. With similar argument, there are totally $ { l \choose {r - 1}} (l-r+1)! |C_0|^{l-r+1} |D_0|^{r-1} $ different $\rho(t)$'s. 
  \end{itemize}
    Thus, there are always exponentially many $\rho(x)$'s satisfying the condition (ii).
\end{remark}

\subsection{Two special cases}

By Remark~\ref{rmk-const1}, the construction in Theorem~\ref{thm-const1} is generic in the sense that we can choose different $\rho(x)$, $u$, $e$ and $r$ to get many new classes of ZDB functions. Now we give two special cases of the construction in Theorem~\ref{thm-const1}, which in fact extended the previously known constructions~\cite{ZTWY12,Ding09,Ding08}.

\subsubsection{Special case I}

Let $q$ be an odd prime power, $m$ be odd, $e=2$, and $u = r = 1$. We have the following construction of ZDB functions. 
 
\begin{corollary}\label{coro-sc1}
  Let $q$ be an odd prime power and $m$ be an odd integer. Define the function $f:\ \gf_{q^m}^* \rightarrow \gf_q$ as
$$
f(x) := \tr( \rho(x) x ),
$$
where $\rho(x)$ is defined as 
\begin{equation*}
  \rho(x) := \left\{ \begin{array}{ll}
    d_0, & \textrm{ if $x$ is a square in $\gf_{q^m}^*$,} \\
    d_1, & \textrm{ if $x$ is a nonsquare in $\gf_{q^m}^*$},
  \end{array} \right.
\end{equation*}
with $d_0, d_1 \in \gf_{q^m}^*$. If $d_0 d_1$ is a square, then the function $f$ is a $(q^m-1, q, q^{m-1}-1)$-ZDB function. Furthermore, if $q^m$ is large enough, when $d_0 \ne \pm d_1$, we can always choose suitable $d_0$ and $d_1$ such that for each square $\delta \in \gf_{q^m} \setminus \{0, 1\}$, $N_b(\delta) = q^{m-1}$, and for some nonsquare $\delta \in \gf_{q^m} \setminus \{0, 1\}$, $N_b(\delta) \ne q^{m-1}$ for all $b \in \gf_q^*$, i.e., the function $f(x)$ is not difference
balanced, where 
$$
N_b(\delta) := \big| \{ x \in \gf_{q^m}^* : f(\delta x ) - f(x) = b \} \big| .
$$
\end{corollary}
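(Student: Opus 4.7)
The plan is to derive the ZDB claim directly from Theorem~\ref{thm-const1} and then to analyze $N_b(\delta)$ by case analysis, reducing the final existence statement to a Weil-type character sum estimate. For the ZDB part, I would set $(e,u,r)=(2,1,1)$ so that $l=2$ and $C_0=\langle\theta^2\rangle$ is the set of nonzero squares in $\gf_{q^m}$. Condition~(i) of Theorem~\ref{thm-const1} is vacuous because $u=1$. For condition~(ii) the only nontrivial $k$ is $k=1$, requiring $d_0/d_1$ and $d_1/d_0$ to lie outside $C_1$. The key observation is that $(q^m-1)/(q-1)=1+q+\cdots+q^{m-1}$ is a sum of $m$ odd integers when $q$ and $m$ are odd, hence is odd; therefore $\alpha=\theta^{(q^m-1)/(q-1)}$ is a nonsquare, $C_1=\alpha C_0$ is precisely the set of nonsquares, and (ii) becomes $d_0 d_1$ a square.

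For the non-DB claim, introduce $T_b(\beta,C_i):=|\{z\in C_i:\tr(\beta z)=b\}|$, $M_b^{+}:=T_b(1,C_0)$, and $M_b^{-}:=T_b(1,C_1)$. If $\delta$ is a square then $\rho(\delta x)=\rho(x)$ and $f(\delta x)-f(x)=\tr(\rho(x)(\delta-1)x)$. Splitting $x$ by $C_0,C_1$, substituting $z=(\delta-1)x$ (which permutes the two cosets according to whether $\delta-1$ is a square), and then $w=d_i z$ (which permutes them according to the quadratic character of $d_i$), I would use that $d_0,d_1$ lie in the same coset (because $d_0 d_1$ is a square) to collapse $N_b(\delta)$ to $M_b^{+}+M_b^{-}=q^{m-1}$ for every $b\in\gf_q^*$, independent of the QR status of $\delta-1$.

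If $\delta$ is a nonsquare then $\rho(\delta x)\ne\rho(x)$ and one gets
\[
  N_b(\delta)=T_b(d_1\delta-d_0,\,C_0)+T_b(d_0\delta-d_1,\,C_1),
\]
both coefficients being nonzero since $d_0/d_1,d_1/d_0$ are squares while $\delta$ is not. A similar coset analysis yields $N_b(\delta)-q^{m-1}=0$ when $g(\delta):=(d_1\delta-d_0)/(d_0\delta-d_1)$ is a square, and $N_b(\delta)-q^{m-1}=\pm(M_b^{+}-M_b^{-})$ otherwise. A standard Gauss-sum evaluation gives $M_b^{+}-M_b^{-}=q^{-1}G(\eta)\,G_q(\eta_q)\,\eta(-b)\ne 0$ for every $b\in\gf_q^*$, where $\eta$ and $\eta_q$ denote the quadratic characters of $\gf_{q^m}^*$ and $\gf_q^*$. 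Hence $N_b(\delta)\ne q^{m-1}$ for every $b\in\gf_q^*$ whenever $g(\delta)$ is a nonsquare.

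To produce such a $\delta$, I would estimate $\tfrac{1}{4}\sum_\delta (1-\eta(\delta))(1-\eta(g(\delta)))$, whose leading non-trivial term is the cubic character sum
\[
  \sum_{\delta\in\gf_{q^m}^*}\eta\bigl(\delta(d_1\delta-d_0)(d_0\delta-d_1)\bigr).
\]
The hypothesis $d_0\ne\pm d_1$ (together with $d_0,d_1\ne 0$) makes the three roots $0,d_0/d_1,d_1/d_0$ distinct, so the cubic is not a square in $\gf_{q^m}[\delta]$ and Weil's bound yields $O(\sqrt{q^m})$ for this sum. Consequently the number of nonsquare $\delta$ with $g(\delta)$ nonsquare is $q^m/4+O(\sqrt{q^m})$, which is positive once $q^m$ is sufficiently large. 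The main technical obstacle is this last combination of Weil's theorem with the Gauss-sum identity guaranteeing $M_b^{+}\ne M_b^{-}$; the remainder is routine coset bookkeeping and substitutions.
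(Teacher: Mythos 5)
Your proposal is correct and follows the same overall route as the paper's proof: derive the ZDB claim from Theorem~\ref{thm-const1} with $(e,l,r,u)=(2,2,1,1)$ (your observation that $(q^m-1)/(q-1)$ is odd, so that $\alpha$ is a nonsquare and condition~(ii) collapses to ``$d_0d_1$ is a square,'' is exactly the needed verification), split the analysis of $N_b(\delta)$ according to the quadratic character of $\delta$, reduce the nonsquare case to the condition $\eta(\delta d_1-d_0)=\eta(\delta d_0-d_1)$, and produce a violating $\delta$ from a count of the form $q^m/4+O(q^{m/2})$. The differences lie only in the auxiliary tools. Where the paper counts solutions of $\tr(\beta y^2)=b$ via non-degenerate quadratic forms in an odd number of indeterminates (Lemma~\ref{lem-quaform}, packaged as Lemma~\ref{lem-pre3}), you track the two cosets directly and evaluate $M_b^{+}-M_b^{-}$ by a Gauss-sum computation; the two are equivalent ($M_b^{+}-M_b^{-}=\pm q^{(m-1)/2}\neq 0$), and your version has the small advantage that the square-$\delta$ case becomes the pure bookkeeping identity $M_b^{+}+M_b^{-}=q^{m-1}$ with no appeal to quadratic-form theory. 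Where the paper invokes the Carlitz--Wells lemma (Lemma~\ref{lem-numsol}) to count solutions of the system (\ref{eqn-const12}), you apply Weil's bound to the cubic character sum $\sum_\delta \eta\big(\delta(d_1\delta-d_0)(d_0\delta-d_1)\big)$, whose squarefreeness is exactly the hypothesis $d_0\neq\pm d_1$; this is the same estimate in different packaging (Carlitz--Wells is itself a Weil-type bound, with $a_1b_2\neq a_2b_1$ playing the role of your distinct-roots condition). One point worth making explicit in a final write-up, which both arguments deliver: for a nonsquare $\delta$ the condition $N_b(\delta)=q^{m-1}$ is independent of $b$, which is why a single bad $\delta$ yields $N_b(\delta)\neq q^{m-1}$ for \emph{all} $b\in\gf_q^*$, as the statement requires.
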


The first argument of Corollary~\ref{coro-sc1} directly follows from Theorem~\ref{thm-const1}. To prove the second one, we need some results on quadratic forms over $\gf_q$. A {\em quadratic form} in $m$ indeterminates over $\gf_q$ is a homogeneous polynomial in $\gf_q[x_1, \ldots, x_m]$ of degree $2$ or the zero polynomial. If $q$ is odd, any quadratic form $f$ over $\gf_q$ can be represented as 
$$
f(x_1, \ldots, x_m) = \sum_{i,j = 1}^m a_{ij} x_i x_j, \textrm{ with $a_{ij} = a_{ji}$}.
$$
The matrix $A = (a_{ij})_{m\times m}$ associated with $f$ is called the {\em coefficient matrix} of $f$.

\begin{lemma}~\cite[Theorem 6.27]{LN97}\label{lem-quaform}
Let $f$ be a non-degenerate quadratic form over $\gf_q$, $q$ odd, in an odd number $m$ of indeterminates. Then for $b \in \gf_q$, the number of solutions of the equation $f(x_1, \ldots, x_m) = b$ in $\gf_q^m$ is
$$
q^{m-1} + q^{(m-1)/2} \eta\left( (-1)^{(m-1)/2} b \Delta \right),
$$
where $\eta$ is the quadratic character of $\gf_q$,  $\Delta = \det(A)$ and $A$ is the coefficient matrix of $f$. 
\end{lemma}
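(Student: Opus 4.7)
The plan is to reduce to the diagonal case and then invoke the standard Gauss sum machinery. First, I would use the classical fact that, over $\gf_q$ with $q$ odd, every non-degenerate quadratic form is equivalent under a non-singular linear substitution to a diagonal form $a_1 y_1^2 + \cdots + a_m y_m^2$ with each $a_i \in \gf_q^*$. Under such a substitution with matrix $P$, the coefficient matrix is transformed by congruence, so its determinant is multiplied by the nonzero square $(\det P)^2$; consequently the value $\eta(\Delta)$ and the solution count of $f(y)=b$ are both preserved. Thus it suffices to establish the formula for the diagonal form with $\Delta = a_1 \cdots a_m$.

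For the diagonal case, I would use orthogonality of additive characters to write
$$
N(b) \;=\; \frac{1}{q}\sum_{c \in \gf_q}\sum_{y \in \gf_q^m} \psi\bigl(c\bigl(f(y) - b\bigr)\bigr),
$$
where $\psi$ is a fixed non-trivial additive character of $\gf_q$. The term $c = 0$ contributes $q^{m-1}$, while for $c \ne 0$ the inner sum factors into one-variable quadratic Gauss sums $\sum_{y_i \in \gf_q}\psi(c a_i y_i^2) = \eta(c a_i)\, G$, where $G = G(\eta, \psi)$. Since $m$ is odd we have $\eta(c)^m = \eta(c)$, and pulling out constants yields
$$
N(b) - q^{m-1} \;=\; \frac{G^m\,\eta(a_1 \cdots a_m)}{q} \sum_{c \in \gf_q^*} \eta(c)\,\psi(-cb).
$$

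The remaining character sum evaluates to $\eta(-b)\,G$ when $b \ne 0$, and vanishes when $b = 0$ (a case that must be treated separately, but which matches the claimed formula because $\eta(0) = 0$). Combining with the fundamental identity $G^2 = \eta(-1)\,q$ gives $G^{m+1} = \eta(-1)^{(m+1)/2}\, q^{(m+1)/2}$; then expanding $\eta(-b) = \eta(-1)\eta(b)$ and collapsing powers of $\eta(-1)$ modulo $2$ produces exactly the factor $\eta\bigl((-1)^{(m-1)/2} b \Delta\bigr)$ in front of $q^{(m-1)/2}$.

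The main obstacle I anticipate is the sign bookkeeping: verifying that the accumulated powers of $\eta(-1)$ collapse to the single factor $(-1)^{(m-1)/2}$ inside the quadratic character requires careful parity tracking, and is the step most prone to off-by-one errors. A secondary subtlety is the invariance of $\eta(\Delta)$ under diagonalization, which reduces to the Jacobian being a square. Apart from these, everything is a routine application of Gauss-sum identities, and the odd-$m$ hypothesis plays its essential role precisely at $\eta(c)^m = \eta(c)$, which is what turns the inner sum into a further Gauss sum.
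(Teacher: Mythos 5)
The paper does not prove this lemma at all: it is quoted verbatim from Lidl--Niederreiter \cite[Theorem 6.27]{LN97}, so there is no internal proof to compare against. Your argument is correct and complete in outline, and the two places you flag as delicate do check out: the congruence $A \mapsto P^{T}AP$ multiplies $\Delta$ by the square $(\det P)^2$, so $\eta(\Delta)$ is a genuine invariant; and the sign bookkeeping closes because $G^{m+1} = \eta(-1)^{(m+1)/2}q^{(m+1)/2}$ combined with $\eta(-b) = \eta(-1)\eta(b)$ gives the exponent $(m+3)/2$ on $\eta(-1)$, which equals $\eta(-1)^{(m-1)/2}$ since the exponents differ by $2$ and $\eta(-1) = \pm 1$; the case $b=0$ is absorbed by the convention $\eta(0)=0$ because $\sum_{c\ne 0}\eta(c)=0$. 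It is worth noting that your route differs from the one in the cited source: Lidl and Niederreiter also first diagonalize, but then induct on the number of variables by peeling off two at a time, using an elementary count of solutions of $a_1x_1^2+a_2x_2^2=b$ (their Lemma 6.24) rather than additive characters. The Gauss-sum computation you propose is the more uniform argument --- it handles all odd $m$ in one stroke and makes the provenance of the factor $\eta((-1)^{(m-1)/2}\Delta)$ transparent --- at the cost of importing the identity $G(\eta,\psi)^2=\eta(-1)q$, whereas the inductive proof stays entirely elementary.
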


\begin{lemma}~\cite{CW66}\cite[Exercise 6.72]{LN97}\label{lem-numsol}
  Let $a_1, a_2, b_1, b_2 \in \gf_q^*$ with $a_1 b_2 \ne a_2 b_1$ where $q$ is a prime power and let $n, n_1, n_2 \in \mathbb{N}$. The number $N$ of common solutions $(x_1, x_2, x_3) \in \gf_q^3$ of the equations
  \begin{equation*}
    \left\{ \begin{array}{l}
      x_1^{n_1} = a_1 + b_1 x_3^n \\
      x_2^{n_2} = a_2 + b_2 x_3^n 
    \end{array} \right.
  \end{equation*}
  satisfies $| N - q| \le C q^{1/2}$ for some constant $C$ independent of $q$.
\end{lemma}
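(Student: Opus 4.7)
The plan is to reduce $N$ to a sum of character sums and apply Weil's theorem. First, I would parametrize by $x_3 \in \gf_q$: letting $r_k(y) := |\{x \in \gf_q : x^k = y\}|$, we have
$$N = \sum_{x_3 \in \gf_q} r_{n_1}(a_1 + b_1 x_3^n)\, r_{n_2}(a_2 + b_2 x_3^n).$$
Using the standard expansion $r_k(y) = 1 + \sum_{\chi \neq \chi_0,\, \chi^{d_k}=\chi_0} \chi(y)$, where $d_k = \gcd(k, q-1)$ and $\chi(0) := 0$ for $\chi \neq \chi_0$, expanding the product and summing yields
$$N = q + \sum_{\chi_1 \neq \chi_0} A(\chi_1) + \sum_{\chi_2 \neq \chi_0} B(\chi_2) + \sum_{\chi_1,\, \chi_2 \neq \chi_0} D(\chi_1, \chi_2),$$
with $A(\chi_1) := \sum_{x_3} \chi_1(a_1 + b_1 x_3^n)$, $B(\chi_2) := \sum_{x_3} \chi_2(a_2 + b_2 x_3^n)$, and $D(\chi_1, \chi_2) := \sum_{x_3} \chi_1(a_1 + b_1 x_3^n)\,\chi_2(a_2 + b_2 x_3^n)$.

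Second, I would invoke Weil's theorem for character sums (see, e.g., \cite[Theorem~5.41]{LN97}) to bound each of $A$, $B$, and $D$ by $C_0\, q^{1/2}$ for a constant $C_0$ depending only on $n, n_1, n_2$. For $A$ and $B$, the hypothesis required is that $a_i + b_i T^n$ is not a perfect $\operatorname{ord}(\chi_i)$-th power in $\overline{\gf_q}[T]$; this holds because $a_i + b_i T^n$ has only simple roots (up to a harmless $p$-th power if $\operatorname{char}(\gf_q) \mid n$). For $D$, the analogous condition becomes that $(a_1 + b_1 T^n)^{\alpha_1} (a_2 + b_2 T^n)^{\alpha_2}$ is not a perfect $d$-th power in $\overline{\gf_q}(T)$ for the exponents $\alpha_i$ and order $d$ associated with the pair $(\chi_1, \chi_2)$.

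The main obstacle is the verification of non-degeneracy for the mixed sum $D(\chi_1, \chi_2)$, and this is precisely where the hypothesis $a_1 b_2 \neq a_2 b_1$ enters: it forces $-a_1/b_1 \neq -a_2/b_2$, so the polynomials $a_1 + b_1 T^n$ and $a_2 + b_2 T^n$ have disjoint root sets in $\overline{\gf_q}$ and are therefore coprime. Together with the simplicity of their roots, this rules out any accidental power relation between them, so Weil's bound applies uniformly in $q$. Since the total number of nontrivial character pairs is at most $n_1 n_2$, independent of $q$, summing the contributions yields $|N - q| \leq C\, q^{1/2}$ with $C$ depending only on $n, n_1, n_2$, as required.
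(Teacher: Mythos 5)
The paper offers no proof of this lemma: it is quoted as a known result of Carlitz and Wells \cite{CW66} (see also \cite[Exercise~6.72]{LN97}), so there is nothing internal to compare against. Your character-sum argument is precisely the standard proof underlying those citations, and it is sound. The expansion of $r_k(y)$ over the characters of order dividing $\gcd(k,q-1)$ isolates the main term $q$; the number of nontrivial characters and character pairs is at most $(n_1-1)+(n_2-1)+(n_1-1)(n_2-1)$, independent of $q$; and Weil's bound (\cite[Theorem~5.41]{LN97}) controls each of $A$, $B$, $D$ by $(2n-1)q^{1/2}$. You also correctly locate the role of $a_1b_2\ne a_2b_1$: it makes $a_1+b_1T^n$ and $a_2+b_2T^n$ coprime, so no product $f_1^{c_1}f_2^{c_2}$ with the relevant exponents can be a perfect power; this is essential, since if $f_2=cf_1$ the mixed sum with $\chi_2=\chi_1^{-1}$ degenerates to $\chi_2(c)\sum_{x_3}\chi_0(f_1(x_3^n))$ and contributes a second main term of order $q$. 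Two points to tighten when writing this out in full: perform the reduction to $\gcd(n,p)=1$ (writing $n=p^kn'$ and using that $x\mapsto x^{p^k}$ permutes $\gf_q$) once at the outset, since squarefreeness of \emph{both} polynomials is needed for the mixed sum $D$, not only for $A$ and $B$; and note that Weil's theorem as stated in \cite{LN97} is for monic $f$ that is not an ${\ord(\chi)}$-th power of a polynomial, so the leading coefficients $b_1,b_2$ should be factored out explicitly (a harmless constant twist of the character).
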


\begin{lemma}\label{lem-pre3}
  Let $q$ be an odd prime power and $m$ be an odd integer. For each $\delta \in \gf_{q^m}^*$, the equation $\tr(\delta x^2) = 0 $ has exactly $q^{m-1}$ solutions in $\gf_{q^m}$, and the equation $\tr(\delta x^2) = b$, with $b \in \gf_q^* $, has exactly $q^{m-1} \pm q^{(m-1)/2}$ solutions depending on the quadratic characters of $\delta$ and $b$. Furthermore, if the equation $\tr(\delta x^2) = b$, for some $\delta \in \gf_{q^m}^*$ and $b \in
  \gf_q^*$, has exactly $q^{m-1} + q^{(m-1)/2} $ solutions,  then the equation $\tr(a \delta x^2) = b$ has exactly $q^{m-1} - q^{(m-1)/2}$ solutions, where $a \in \gf_q^*$ is a nonsquare, and vice versa.
\end{lemma}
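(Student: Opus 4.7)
The strategy is to identify $\gf_{q^m}$ with $\gf_q^m$ via a fixed $\gf_q$-basis $\beta_1,\ldots,\beta_m$ and read $x \mapsto \tr(\delta x^2)$ as a quadratic form in $m$ indeterminates over $\gf_q$, then invoke Lemma~\ref{lem-quaform}. Writing $x = \sum_{i=1}^m x_i \beta_i$, the polynomial
$$
Q_\delta(x_1,\ldots,x_m) := \tr(\delta x^2) = \sum_{i,j=1}^m \tr(\delta \beta_i \beta_j)\, x_i x_j
$$
is a quadratic form over $\gf_q$ with symmetric coefficient matrix $A_\delta = (\tr(\delta\beta_i\beta_j))_{i,j}$. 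Its associated symmetric bilinear form is $B_\delta(x,y) = \tr(2\delta xy)$, which is non-degenerate because the trace pairing on $\gf_{q^m}/\gf_q$ is non-degenerate and $2\delta \in \gf_{q^m}^*$; hence $\det(A_\delta) \in \gf_q^*$ and $Q_\delta$ is non-degenerate.

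Applying Lemma~\ref{lem-quaform}, the number of solutions of $\tr(\delta x^2) = b$ in $\gf_{q^m}$ equals
$$
q^{m-1} + q^{(m-1)/2}\, \eta\bigl((-1)^{(m-1)/2}\, b\, \det(A_\delta)\bigr),
$$
where $\eta$ is the quadratic character of $\gf_q$ (with $\eta(0)=0$). This immediately gives $q^{m-1}$ when $b=0$, and $q^{m-1} \pm q^{(m-1)/2}$ when $b \in \gf_q^*$, the sign being determined by $\eta(b)$ and $\eta(\det(A_\delta))$. To make the dependence on $\delta$ transparent, I would note that the substitution $x \mapsto cx$ sends $Q_\delta$ to $Q_{c^2\delta}$ while altering the coefficient matrix by $M_c^{\top} A_\delta M_c$, where $M_c$ is the $\gf_q$-matrix of multiplication by $c$ on $\gf_{q^m}$; thus $\det(A_{c^2\delta})$ and $\det(A_\delta)$ differ by the square $\det(M_c)^2$, so $\eta(\det(A_\delta))$ depends only on whether $\delta$ is a square in $\gf_{q^m}^*$.

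For the ``vice versa'' clause, observe that $Q_{a\delta} = aQ_\delta$ for $a \in \gf_q^*$, so $A_{a\delta} = aA_\delta$ and $\det(A_{a\delta}) = a^m \det(A_\delta)$. Because $m$ is odd, $\eta(a^m) = \eta(a)$, yielding
$$
\eta\bigl((-1)^{(m-1)/2}\, b\, \det(A_{a\delta})\bigr) = \eta(a)\,\eta\bigl((-1)^{(m-1)/2}\, b\, \det(A_\delta)\bigr),
$$
and choosing $a$ to be a nonsquare (so $\eta(a)=-1$) flips the sign in the solution count, which gives both directions of the claim. The main (and essentially only non-routine) step is the verification that $A_\delta$ is non-singular; after that, everything reduces to bookkeeping around Lemma~\ref{lem-quaform} and the multiplicativity of $\eta$.
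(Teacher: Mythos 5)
Your proof is correct and follows essentially the same route as the paper: non-degeneracy of $x \mapsto \tr(\delta x^2)$ via the trace bilinear form $\tr(2\delta xy)$, followed by an application of Lemma~\ref{lem-quaform}. The only cosmetic difference is in the final clause, where the paper rewrites $\tr(a\delta x^2)=b$ as $\tr(\delta x^2)=a^{-1}b$ rather than tracking $\det(A_{a\delta})=a^{m}\det(A_{\delta})$; the two computations are equivalent.
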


\begin{proof}
Note that the bilinear form
\begin{equation*}
  B(x,y) = \tr(\delta(x+y)^2) - \tr(\delta x^2) - \tr( \delta y^2) = \tr(2\delta xy)
\end{equation*}
is non-degenerate. Therefore, $f(x) = \tr(\delta x^2)$ could be viewed as a non-degenerate quadratic form in $m$ indeterminates over $\gf_q$. Since $a$ is a nonsquare in $\gf_q^*$, we have $\tr(a \delta x^2) = b$ is equivalent to $\tr(\delta x^2) =  b a^{-1} $. Note that both $q$ and $m$ are odd. Then from Lemma~\ref{lem-quaform}, the conclusion follows.  
\end{proof}

Now we present the proof of the second assertion of Corollary~\ref{coro-sc1}.

 \begin{proof}[Proof of Corollary~\ref{coro-sc1}]
   By Theorem~\ref{thm-const1}, $N_0(\delta) = q^{m-1} - 1$ for each $\delta \in \gf_{q^m} \setminus \{0,1\}$ if $d_0 d_1$ is square. We now discuss the possible values of $N_b(\delta)$ for $b \in \gf_q^*$. 

If $\delta$ is a square, we have $\rho(\delta x) = \rho(x)$. Since $d_0d_1$ is a square, there are two cases. On one hand, if both $d_0$ and $d_1$ are squares in $\gf_{q^m}^*$, without loss of generality, suppose that $d_0 = u^2$ and $d_1 = v^2$ with $u, v \in \gf_{q^m}^*$, we then have 
\begin{eqnarray*}
  \lefteqn{ f(\delta x) - f(x) } \\
& = & \tr( (\delta - 1) \rho(x) x ) \\
  & = & \left\{ \begin{array}{ll}
    \tr( (\delta-1)d_0 y^2), & \textrm{ if $x = y^2$,} \\
    a \tr( (\delta-1)d_1 y^2), & \textrm{ if $x = a y^2$,}
  \end{array} \right. \\
  & = & \left\{ \begin{array}{ll}
    \tr( ( \delta - 1) u^2 y^2), & \textrm{ if $x = y^2$,} \\
    a \tr( ( \delta - 1) v^2 y^2), & \textrm{ if $x = a y^2$,}
  \end{array} \right. \\
  & = & \left\{ \begin{array}{ll}
    \tr( ( \delta - 1) (uy)^2), & \textrm{ if $x = y^2$,} \\
    a \tr( ( \delta - 1) (vy)^2), & \textrm{ if $x = a y^2$,}
  \end{array} \right.
\end{eqnarray*}
where $a \in \gf_q^*$ is a nonsquare. It then follows from Lemma~\ref{lem-pre3} that
\begin{equation*}
  N_b(\delta) = \frac{q^{m-1} + q^{{(m-1)}/2}}{2} +  \frac{q^{(m-1)} - q^{(m-1)/2}}{2} = q^{m-1}.
\end{equation*}
On the other hand, if $d_0$ and $d_1$ are both nonsquares, the argument is similar and we also obtain 
$$
N_b(\delta) = q^{m-1}.
$$

If $\delta$ is a nonsquare, we have
\begin{eqnarray} \label{eqn-const11}
  \lefteqn{f(\delta x) - f(x)} \nonumber \\
  & = & \tr(\delta x \rho(\delta x) - x \rho (x) ) \nonumber \\
  & = & \left\{ \begin{array}{ll}
    \tr( ( \delta d_1 - d_0) y^2), & \textrm{ if $x = y^2$,} \\
    a \tr( (\delta d_0 - d_1) y^2), & \textrm{ if $x = ay^2$,}
  \end{array} \right.
\end{eqnarray}
where $a \in \gf_q^*$ is a nonsquare. By Lemma~\ref{lem-pre3} and (\ref{eqn-const11}), we have $N_b(\delta) = q^{m-1}$ if and only if 
$$
\eta( \delta d_1 - d_0 ) = \eta( \delta d_0 - d_1 ),
$$
where $\eta$ is the quadratic character of $\gf_{q^m}$. This means that both of the following two systems of equations
\begin{equation}\label{eqn-const12}
  \left\{ \begin{array}{l}
    a z^2 d_1 - d_0 = x^2 \\
    a z^2 d_0 - d_1 = a y^2
  \end{array} \right.
\end{equation}
and
\begin{equation}\label{eqn-const13}
  \left\{ \begin{array}{l}
    a z^2 d_1 - d_0 = a x^2 \\
    a z^2 d_0 - d_1 = y^2 
  \end{array} \right.
\end{equation}
have no solution, where $a$ is a nonsquare in $\gf_q^*$. The system of equations (\ref{eqn-const12}) is equivalent to
\begin{equation*}
  \left\{ \begin{array}{l}
    x^2 = - d_0  + a d_1 z^2 \\
    y^2 = - d_1/a + d_0 z^2 . 
  \end{array} \right. 
\end{equation*}
Then by Lemma~\ref{lem-numsol}, the number $N_1$ of solutions of (\ref{eqn-const12}) satisfies 
$$
| N_1 - q^m | \le C q^{m/2},
$$
for some constant $C$ independent of $q$ when $d_0 \ne \pm d_1$. Thus, for a large enough $q^m$, we can always choose suitable $d_0$ and $d_1$ such that $N_1 \ne 0$. Then we have $N_b(\delta) \ne q^{m-1}$ for each $b \in \gf_q^*$, which completes the proof. 
\end{proof}

\begin{remark}\label{rmk-sc1}
  \begin{itemize}
    \item[a)] The trace function can be viewed as a subcase of the construction of ZDB functions in Corollary~\ref{coro-sc1} (if $d_0 = d_1$, also see~\cite{ZTWY12}). We note that this construction is new since for large $q^m$, we can always choose suitable $d_0$ and $d_1$ such that the ZDB functions are not difference balanced, while all previously known ZDB functions with the same parameters are difference balanced.
    \item[b)] Since every ZDB function $f(x)$ constructed in Corollary~\ref{coro-sc1} has the parameters $(q^m-1, q, q^{m-1}-1)$, by Lemma~\ref{lem-lowbound}, there are $q-1$ preimage sets of size $q^{m-1}$ and the rest one preimage set of size $q^{m-1} - 1$. 
  \end{itemize}
\end{remark}

\begin{example}\label{exm-sc1}
  Let $q=3$, $m=3$. Define $d_0 := 1$, $d_1 :=\theta^2$ where $\theta$ is a root of the irreducible polynomial $x^3 + 2x + 1 \in \gf_q[x]$. Then for the function $f: \gf_{q^m}^* \rightarrow \gf_q$, defined as in Corollary~\ref{coro-sc1}, $N_0(\delta) = 9$ for each $\delta \in \gf_{3^3} \setminus \{0,1\}$, and the distribution of $N_b(\delta)$ for all $b\neq 0$ is:
    \begin{table}[h]
    \begin{center}
    \begin{tabular}{|c|c|c|c|}
      \hline
      $N_b(\delta)$ & 6 & 9 & 12  \\\hline
      multiplicity & 4 & 17 & 4  \\
      \hline
    \end{tabular}
    \end{center}
    \end{table}
\end{example}

\subsubsection{Special case II}
Let $q$ be a prime power and $u = 1$. We have the second special case of Theorem~\ref{thm-const1} as follows. 

\begin{corollary}\label{coro-sc2}
  Let $q$ be a prime power, $e$ be a divisor of $q-1$ with $\gcd(e, m) = 1$ and $e = l\cdot r$. Let $D_0 = \langle \theta^r \rangle$, $C_0 = \langle \theta^e \rangle$, and $\alpha = \theta^{\frac{q^m-1}{q-1}}$. Define the function $f: D_0 \rightarrow \gf_q$ by
  $$
   f(x) :=  \tr(\rho(x) x), 
  $$
and $\rho(x)$ is defined as
$$
\rho(x) := d_i, \ \textrm{ if $x \in C_i$} ,
$$
where $C_i = \alpha^{ir} C_0$ and $d_i \in \gf_{q^m}^*$ for $0 \le i \le l - 1$. If $d_j / d_{k+j} \not \in C_k$ for each $k \ne 0$ and $0 \le j < l$, then the function $f(x)$ is a $\left(\frac{q^m - 1}{r}, q, \frac{q^{m-1} - 1}{r}\right)$-ZDB function.
\end{corollary}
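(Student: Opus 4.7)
The plan is to obtain Corollary~\ref{coro-sc2} as the direct specialization $u = 1$ of Theorem~\ref{thm-const1}. The approach has two components: first, verify that both hypotheses of Theorem~\ref{thm-const1} hold (or are trivially implied) in the $u = 1$ case; second, match up the domains, since Theorem~\ref{thm-const1} is phrased on $(\bZ_n,+)$ while the corollary is phrased on $D_0$.

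For the first component, I would check condition (i) by substituting $u = 1$: the set $\{x \in C_0 : x^u = 1 \text{ and } x \neq 1\}$ becomes $\{x \in C_0 : x = 1 \text{ and } x \neq 1\}$, which is manifestly empty, so (i) holds automatically. For condition (ii), substituting $u = 1$ turns the requirement $d_j/d_{k+j} \notin C_{uk}$ into $d_j/d_{k+j} \notin C_k$ verbatim, which is precisely the hypothesis of the corollary. In addition, the constraint $\gcd(u,m) = 1$ from Theorem~\ref{thm-const1} is trivially met.

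For the second component, I would invoke the identification already highlighted in the paragraph following Theorem~\ref{thm-const1}: the map $t \mapsto \theta^{rt}$ gives an isomorphism $(\bZ_n,+) \cong D_0$ (with $n = (q^m - 1)/r$), and carries the definition $\rho(t) = d_i$ if $\theta^{rt} \in C_i$ directly to $\rho(x) = d_i$ if $x \in C_i$. Under this identification, $f(t) = \tr(\rho(t)\theta^{rut})$ with $u = 1$ becomes $f(x) = \tr(\rho(x) x)$, which is exactly the function in the corollary. The ZDB parameters transport as well, yielding $\left(\frac{q^m-1}{r}, q, \frac{q^{m-1}-1}{r}\right)$.

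There is no substantive obstacle here; the corollary is a clean instance of the more general theorem, and the only bookkeeping is the domain relabelling between $\bZ_n$ and $D_0$, which is already anticipated in the text. Hence invoking Theorem~\ref{thm-const1} finishes the proof.
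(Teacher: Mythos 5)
Your proposal is correct and matches the paper's own proof, which simply cites Theorem~\ref{thm-const1}; you merely spell out the routine verification that $u=1$ makes condition (i) vacuous and turns condition (ii) into the corollary's hypothesis, together with the $\bZ_n \cong D_0$ identification already noted in the text.
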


\begin{proof}
  The conclusion follows from Theorem~\ref{thm-const1}.
\end{proof}

\begin{remark}\label{rmk-sc2}
  The construction in~\cite[Theorem 9]{Ding09} can be viewed as a subcase of the construction of ZDB functions given in Corollary~\ref{coro-sc2} (if $d_0 = d_1 = \cdots = d_{l-1}$, see also~\cite[Proposition 7]{Ding08}).  
\end{remark}

We give the following example to compare our construction in Corollary~\ref{coro-sc2} with the construction in~\cite[Theorem 9]{Ding09}. 

\begin{example}\label{exm-sc2}
  Let $q = 3^2$, $m = 3$, $l = r = 2$, $e = 4$, and $\theta$ be a root of the irreducible polynomial $x^6 + 2 x^4 + x^2 + 2x + 2 \in \gf_3[x]$. Define $\rho(x)$ as
  \begin{equation*}
    \rho(x) := \left\{ \begin{array}{ll}
      \theta^4, \textrm{ if $x \in \langle \theta^4 \rangle$,}\\
      \theta^8, \textrm{ if $x \in \theta^2 \langle \theta^4 \rangle$.}
    \end{array}\right.
  \end{equation*}
Then for the function $f : D_0 = \langle \theta^2 \rangle \rightarrow \gf_q$, defined in Corollary~\ref{coro-sc2}, $N_0(\delta) = 40$, and for $b \ne 0$, $N_b(\delta)$ has exactly three possible values: $36$, $45$, and $54$; in comparison, for the function $f : D_0 \rightarrow \gf_q$ defined in~\cite[Theorem 9]{Ding09}, $N_0(\delta) = 40$, and for $b \ne 0$, $N_b(\delta)$ has only two possible values: $36$ and $45$.
\end{example}

%\subsubsection{Special case III}
%Let $q$ be an odd prime power, $m$ be an odd integer, and $u = 2$. Another special case of Theorem~\ref{thm-const1} is the following.

%\begin{corollary}\label{coro-sc3}
%Let $e$ be a divisor of $q-1$ with $\gcd(e, m) = 1$, $e = l\cdot r$ and $\theta$ be a primitive element of $\gf(q^m)^*$, where $q$ is an odd prime power and $m$ is odd. Let $D = \langle \theta^r \rangle$ and $C_0 = \langle \theta^e \rangle$. Define the function $f: D \rightarrow \gf(q)$ by
  %$$
  %f(x) :=  \tr(\rho(x) x^2),
  %$$
  %where $\tr(x)$ denotes the trace function from $\gf(q^m)$ to $\gf(q)$, and $\rho(x)$ is defined as
  %$$
  %\rho(x) := d_i, \ \textrm{ if $x \in C_i$,}
  %$$
  %where $C_i = \alpha^{ir} C_0$ and $d_i \in \gf(q^m)^*$. If $-1 \not \in C_0$ and $ d_j / d_{k+j} \not \in C_{2k}$ for each $k \ne 0$ and $0 \le j \le l - 1$, then the function $f(x)$ is a $\left(\frac{q^m - 1}{r}, q, \frac{q^{m-1} - 1}{r}\right)$-ZDB function.
%\end{corollary}

%\begin{proof}
  %The conclusion follows from Theorem~\ref{thm-const1}.
%\end{proof}

%\begin{example}\label{exm-sc3}
  %Let $q=7$, $m=3$ and $r = 1$. Define $d_0 := 1$, $d_1: = 3$. Then for the function $f:\gf(q^m)^* \rightarrow \gf(q)$, defined as in Corollary~\ref{coro-sc3}, $N_0(\delta) = 48$ for each $\delta \in \gf(7^3)^* \setminus \{1\}$, and the distribution of $N_b(\delta)$ for all $b\neq 0$ is:
    %\begin{table}[ht]
    %\begin{center}
    %\begin{tabular}{|c|c|c|c|}
      %\hline
      %$N_b(\delta)$ & 42 & 49 & 56  \\\hline
      %multiplicity & 43 & 255 & 43  \\
      %\hline
    %\end{tabular}
    %\end{center}
    %\end{table}
%\end{example}

\section{New sets of ZDB functions}\label{sec-exd}

%Throughout this section, Let $e$ and $u$ be two divisors of $q-1$ with $\gcd(e, m)=\gcd(u,m) = 1$ where $q$ is a prime power, $e = l\cdot r$, $\theta$ be a primitive element of $\gf_{q^m}^*$ and $\alpha = \theta^{\frac{q^m-1}{q-1}}$. Define $D_0 := \langle \theta^r \rangle$ and $C_0 := \langle \theta^e \rangle$. 
The construction of ZDB functions in Theorem~\ref{thm-const1} can generate many new single ZDB functions. In this section, we show that it can be extended in a natural way to construct a set of ZDB functions in which any two distinct ZDB functions are also related uniformly. Furthermore, we present some constructions of ZDB functions with flexible parameters.

\subsection{The construction}

\begin{theorem}\label{thm-const3}
   With the same notations as in Theorem~\ref{thm-const1}, define the set $\cals := \{f_i: 0 \leq i < r \}$, and each $f_i: (\bZ_n, +) \rightarrow (\gf_q,+)$ where $n = \frac{q^m-1}{r}$ as
  $$
  f_i(t) := \tr( \alpha^i \rho(t) \theta^{rut}),
  $$
  where $\rho(t)$ is defined as
  $$
  \rho(t) = d_i, \ \textrm{ if $\theta^{rt} \in C_i$,}
  $$
  with $C_i = \alpha^{i r} C_0$ and $d_i \in D_0$ for $0 \leq i < l$ . If the two following conditions
  \begin{enumerate}
    \item[(i)] $\{x \in C_0 :  x^u=1 \textrm{ and } x\neq 1\} = \emptyset$;
    \item[(ii)] $d_j / d_{k+j} \not \in C_{uk}$ for each $k \ne 0$ and $0 \le j < l$, where the subscripts $uk$ and $k+j$ are performed modulo $l$,
    %\item[(iii)] $\{x \in C_0: x^u=\alpha^i \} = \emptyset$ for any $i = \pm 1, \pm 2, \ldots, \pm (r-1)$;
    %\item[(iv)] $ \alpha^i d_j / d_{k+j} \not \in C_{uk}$ for any $k \ne 0$, $0 \le j < l$ and $i = \pm 1, \pm 2, \ldots, \pm (r-1)$, where the subscript $uk$ is performed modulo $l$,
  \end{enumerate}
  are satisfied, then each function $f_i(t) \in \cals$ is a $\left(\frac{q^m - 1}{r}, q, \frac{q^{m-1} - 1}{r}\right)$-ZDB function, and any two distinct functions $f_{i_1}(t), f_{i_2}(t) \in \cals$ satisfy
$$
\big| \{ t \in \bZ_n: f_{i_1}( t + a ) - f_{i_2}(t) = 0 \} \big| = \frac{q^{m-1} - 1}{r},
$$
for $0 \leq i_1 \ne i_2 < r$ and every $a \in \bZ_n$. 
  \end{theorem}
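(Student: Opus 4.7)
The plan is to handle the two claims in sequence. The first claim should follow essentially for free from Theorem~\ref{thm-const1}, while the second requires one small extra observation about the position of $\alpha^{i_1-i_2}$ relative to $D_0$.

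For the first claim, observe that $f_i(t) = \tr(\alpha^i \rho(t)\theta^{rut})$ is exactly the function built in Theorem~\ref{thm-const1} with weight constants $\widetilde d_j := \alpha^i d_j \in \gf_{q^m}^*$. Condition (i) depends only on $u$, $e$ and $C_0$ and is inherited verbatim. Condition (ii) involves only the ratios $\widetilde d_j/\widetilde d_{k+j} = d_j/d_{k+j}$, so it too is inherited. Theorem~\ref{thm-const1} then gives that each $f_i$ is a $\left(\tfrac{q^m-1}{r},q,\tfrac{q^{m-1}-1}{r}\right)$-ZDB function.

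For the second claim, I would replay the decomposition from the proof of Theorem~\ref{thm-const1}. Fix $a \in \bZ_n$ and let $k \in \{0,1,\ldots,l-1\}$ satisfy $\theta^{ra} \in C_k$ (take $k=0$ when $a=0$). A direct expansion yields
\[
\big|\{t \in \bZ_n : f_{i_1}(t+a) - f_{i_2}(t) = 0\}\big| \;=\; \sum_{j=0}^{l-1}\big|\{x \in C_j : \tr(M_j\,x^u) = 0\}\big|,
\]
where $M_j := \alpha^{i_1}d_{k+j}\theta^{rau} - \alpha^{i_2}d_j$ and subscripts are taken modulo $l$. Once I confirm $M_j \ne 0$ for every $j$, Lemma~\ref{lem-pre2} gives each inner count as $\tfrac{q^{m-1}-1}{lr}$, and summing over $j$ produces $\tfrac{q^{m-1}-1}{r}$, as desired.

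The main obstacle is verifying $M_j \ne 0$, and this is the step where the strengthened hypothesis $d_i \in D_0$ (together with $\gcd(r,m)=1$) is essential. Rewriting $M_j = 0$ as $\alpha^{i_1-i_2}\theta^{rau} = d_j/d_{k+j}$, the right side lies in $D_0$ because each $d_i$ does, and $\theta^{rau} \in D_0 = \langle\theta^r\rangle$ holds trivially, so it suffices to show $\alpha^{i_1-i_2} \notin D_0$. Using $\alpha = \theta^{(q^m-1)/(q-1)}$ and $(q^m-1)/(q-1) = 1+q+\cdots+q^{m-1} \equiv m \pmod{r}$ (valid since $r \mid q-1$), one checks $\alpha^s \in D_0 \iff r \mid sm \iff r \mid s$, where the last equivalence uses $\gcd(r,m)=1$ (which follows from $r \mid e$ and $\gcd(e,m)=1$). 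Since $i_1 \ne i_2$ and $0 \le i_1, i_2 < r$, we have $0 < |i_1-i_2| < r$, so $r \nmid (i_1-i_2)$ and hence $\alpha^{i_1-i_2} \notin D_0$. This single observation kills every $M_j$ uniformly across all $a$ and $k$. Notably, neither condition (i) nor (ii) is invoked in the cross-correlation calculation; they enter only through the single-function part.
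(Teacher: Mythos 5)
Your proof is correct and follows essentially the same route as the paper: each $f_i$ is reduced to Theorem~\ref{thm-const1}, and the cross-correlation count comes from the same coset decomposition plus Lemma~\ref{lem-pre2} once the coefficients $M_j$ are shown to be nonzero. Your verification of $M_j\neq 0$ via $\alpha^{i_1-i_2}\notin D_0$ (which is exactly the disjointness statement of Lemma~\ref{lem-pre}) is a slightly cleaner, uniform version of the paper's argument, which instead treats the case $\theta^{ra}\in C_0$ separately by an explicit congruence computation.
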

 
\begin{proof}
  By definition, $f_i(t) = \alpha^i \tr( \rho(t) \theta^{rut})$. Then from Theorem~\ref{thm-const1} it follows that each $f_i(t) \in \cals$ is a $\left( \frac{q^m - 1}{r}, q, \frac{q^{m-1} - 1}{r} \right)$-ZDB function if the conditions (i) and (ii) are satisfied.

  For any two distinct functions $f_{i_1}(t), f_{i_2}(t) \in \cals$, without loss of generality, assume that $\theta^{ra} \in C_k$ for some $0 \leq k < l$. We then have
  \begin{eqnarray*}
    \lefteqn{ \big| \{ t \in \bZ_n: f_{i_1}(t + a) - f_{i_2}(t) = 0 \}\big|}  \\
    & = & \big| \{ t \in \bZ_n : \alpha^{i_1} \tr\left( (\rho(t+a) \theta^{rau} - \rho(t) \alpha^{ {i_2} - {i_1} } ) \theta^{rut} \right) = 0 \} \big| \\
    & = & \sum_{j = 0}^{l- 1} \big| \{ x \in C_j : \tr\left( ( d_{k+j} \theta^{rau} - d_j \alpha^{i_2 - i_1} ) x^u \right) = 0 \} \big| . 
  \end{eqnarray*}
  If $k = 0$, i.e., $\theta^{ra} \in C_0$, suppose that $d_j \theta^{rau} - d_j \alpha^{i_2 - i_1} = 0$ for some $0 \leq i_1 \ne i_2 < r$ and $\alpha = \theta^{\frac{q^m-1}{q-1}}$, which means there exists some $0 \leq c < \frac{q^m-1}{e}$, such that 
  \begin{equation}\label{eqn-setprf1}
  c \cdot e \cdot u \equiv \frac{q^m-1}{q-1} \cdot i \pmod{q^m-1},
  \end{equation}
for some $i = \pm 1, \pm 2, \ldots, \pm (r-1)$. Since $\gcd(e,m) = \gcd(u,m) = 1$, both $e$ and $u$ are co-prime to $\frac{q^m - 1}{q-1}$. Thus, $c$ in (\ref{eqn-setprf1}) must possess a divisor $\frac{q^m - 1}{q-1}$. The relation (\ref{eqn-setprf1}) is then equivalent to that there exists a $0 \leq c' < \frac{q-1}{e}$, such that
  \begin{equation}\label{eqn-setprf2}
    e \cdot u \cdot c' - i  \equiv 0 \pmod{q-1},
  \end{equation}
  for some $i = \pm 1, \pm 2, \ldots, \pm (r-1)$. However, since $e \nmid i$, (\ref{eqn-setprf2}) cannot hold anyway. Therefore, $d_j \theta^{rau} - d_j \alpha^{i_2 - i_1} \ne 0$ for $\theta^{ra} \in C_0$ and any $0 \leq i_1 \ne i_2 < r$. Then by Lemma~\ref{lem-pre2}, we have 
  $$
  \big| \{ t \in \bZ_n: f_{i_1}(t+a) - f_{i_2}(t) = 0 \} \big| = \frac{q^{m-1}-1}{r},
  $$
  for $\theta^{ra} \in C_0$ and any $0 \leq i_1 \ne i_2 < r$.

  If $k \ne 0$, since $d_i \in D_0$ for each $0 \leq i < l$, by Lemma~\ref{lem-pre}, we have $d_{k+j} \theta^{rau} - d_j \alpha^{i_2 - i_1} \ne 0$ for each $\theta^{ra} \in C_k$ and any $0 \leq i_1 \ne i_2 < r$. By Lemma~\ref{lem-pre2}, we also have
  $$
  \big| \{ t \in \bZ_n: f_{i_1}( t+a ) - f_{i_2}(t) = 0 \} \big| = \frac{q^{m-1}-1}{r},
  $$
  for $\theta^{ra} \in C_k$ with $0 < k < l$ and any $0 \leq i_1 \ne i_2 < r$.
    The proof is then completed.
\end{proof}

\begin{remark}\label{rmk-const3}
  According to Remark~\ref{rmk-const1}, the two sufficient conditions in Theorem~\ref{thm-const3} can be satisfied easily, and there are exponentially many $\rho(t)$'s satisfying the conditions. 
\end{remark}

The following construction of sets of ZDB functions is more general.

\begin{corollary}\label{coro-const3}
  Let $\{g_0, g_1, \ldots, g_{r-1}\}$ be a complete set of representatives for the cyclotomic classes of order $r$ in $\gf_{q^m}$. Define the set $\cals := \{f_i: 0 \leq i < r \}$, and each $f_i: (\bZ_n,+) \rightarrow (\gf_q,+)$ where $n = \frac{q^m-1}{r}$ as
  $$
  f_i(t) := \tr( g_i \rho(t) \theta^{rut}),
  $$
  where $\rho(t)$ is defined as
  $$
  \rho(t) = d_i, \ \textrm{ if $\theta^{rt} \in C_i$,}
  $$
  with $C_i = \alpha^{i r} C_0$, $\alpha = \theta^{\frac{q^m-1}{q-1}}$, and $d_i \in D_0$ for $0 \leq i < l$ . If the following two conditions
  \begin{enumerate}
    \item[(i)] $\{x \in C_0 : x^u=1 \textrm{ and } x\neq 1\} = \emptyset$;
    \item[(ii)] $d_j / d_{k+j} \not \in C_{uk}$ for each $k \ne 0$ and $0 \le j < l$, where the subscripts $uk$ and $k+j$ are performed modulo $l$,
  \end{enumerate}
  are satisfied, then each function $f_i(t) \in \cals$ is a $\left(\frac{q^m - 1}{r}, q, \frac{q^{m-1} - 1}{r}\right)$-ZDB function, and any two distinct functions $f_{i_1}(t), f_{i_2}(t) \in \cals$ satisfy
$$
\big| \{ t \in \bZ_n : f_{i_1}( t + a ) - f_{i_2} (t) = 0 \} \big| = \frac{q^{m-1} - 1}{r},
$$
for each $0 \leq i_1 \ne i_2 < r$ and every $a \in \bZ_n$. 
  \end{corollary}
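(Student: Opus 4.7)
The plan is to follow the structure of the proof of Theorem~\ref{thm-const3} very closely, with the observation that the stronger hypothesis on $\{g_0, \ldots, g_{r-1}\}$---namely that they are coset representatives of $D_0$ in $\gf_{q^m}^*$---replaces the delicate modular arithmetic step (\ref{eqn-setprf1})--(\ref{eqn-setprf2}) by a one-line containment argument. First I would dispatch the claim that each individual $f_i \in \cals$ is ZDB by noting that $f_i(t) = \tr(\rho_i(t) \theta^{rut})$ with $\rho_i(t) := g_i d_j$ whenever $\theta^{rt} \in C_j$, and $\rho_i(t) \in \gf_{q^m}^*$ since $g_i \neq 0$. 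The ratios that appear in condition~(ii) of Theorem~\ref{thm-const1} transform as $(g_i d_j)/(g_i d_{k+j}) = d_j/d_{k+j}$, so the hypothesis is preserved and Theorem~\ref{thm-const1} applies verbatim.

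For the cross-correlation claim, pick distinct $f_{i_1}, f_{i_2} \in \cals$ and $a \in \bZ_n$, and assume $\theta^{ra} \in C_k$ for some $0 \le k < l$. I would expand
\[
f_{i_1}(t+a) - f_{i_2}(t) = \tr\bigl((g_{i_1} \rho(t+a) \theta^{rau} - g_{i_2} \rho(t)) \theta^{rut}\bigr),
\]
and partition the counting range of $t$ according to the coset of $C_j$ containing $\theta^{rt}$, getting
\[
\big| \{ t \in \bZ_n: f_{i_1}(t+a) - f_{i_2}(t) = 0 \} \big| = \sum_{j=0}^{l-1} \big| \{ x \in C_j : \tr((g_{i_1} d_{k+j} \theta^{rau} - g_{i_2} d_j)\, x^u) = 0 \} \big|.
\]
Provided each coefficient $g_{i_1} d_{k+j} \theta^{rau} - g_{i_2} d_j$ is nonzero, Lemma~\ref{lem-pre2} contributes $(q^{m-1}-1)/(lr)$ per summand, yielding the claimed value $(q^{m-1}-1)/r$ after summing over $j$.

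The key step---and the step I expect to be the main obstacle---is verifying this nonvanishing uniformly in $k$. If the coefficient vanished, we would have $g_{i_1}/g_{i_2} = d_j/(d_{k+j}\, \theta^{rau})$. The right-hand side belongs to $D_0$: by hypothesis $d_j, d_{k+j} \in D_0$, and $\theta^{rau} = (\theta^r)^{au}$ lies in the cyclic group $D_0 = \langle \theta^r \rangle$. The left-hand side, however, cannot lie in $D_0$, since $g_{i_1}$ and $g_{i_2}$ represent distinct cosets of $D_0$ in $\gf_{q^m}^*$. This contradiction handles both $k=0$ and $k \neq 0$ in one stroke---cleaner than the corresponding split in Theorem~\ref{thm-const3}---and the only care required is to ensure that the partition $D_0 = \dot\bigcup_{j=0}^{l-1} C_j$ from Lemma~\ref{lem-pre} is used when reindexing the sum and that the $u$-homogeneity inputs to Lemma~\ref{lem-pre2} remain valid with the constants $g_{i_1} d_{k+j} \theta^{rau} - g_{i_2} d_j \in \gf_{q^m}^*$.
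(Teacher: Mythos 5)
Your proof is correct, but it reaches the conclusion by a somewhat different route than the paper. The paper's own proof is a two-line reduction: writing $g_i = \alpha^i g_i'$ with $g_i' \in D_0$ (via Lemma~\ref{lem-pre}), it declares the result ``straightforward from'' the proof of Theorem~\ref{thm-const3}, which in particular means re-running that proof's split into the cases $\theta^{ra} \in C_0$ and $\theta^{ra} \in C_k$ with $k \ne 0$, including the modular-arithmetic step (\ref{eqn-setprf1})--(\ref{eqn-setprf2}) adapted to carry the extra factors $g_{i_1}', g_{i_2}'$. You instead argue self-containedly: for the single-function claim you absorb $g_i$ into the constants ($d_j \mapsto g_i d_j$, which leaves condition (ii) invariant since the $g_i$ cancels in the ratios, and Theorem~\ref{thm-const1} only needs the constants to be nonzero), and for the cross-correlation you observe that vanishing of a coefficient would force $g_{i_1}/g_{i_2} = d_j/(d_{k+j}\theta^{rau}) \in D_0$, impossible because $g_{i_1}$ and $g_{i_2}$ represent distinct cosets of $D_0$. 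This single containment argument subsumes both the $k=0$ and $k\ne 0$ cases and replaces the congruence computation entirely; it also makes transparent that the hypothesis ``complete set of coset representatives'' is exactly what drives the cross-correlation property. The paper's version is shorter on the page but delegates the real work to Theorem~\ref{thm-const3}; yours is longer but more uniform and easier to audit. Both are valid.
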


  \begin{proof}
    Without loss of generality, suppose that $g_i \in D_i$. By Lemma~\ref{lem-pre}, we have $g_i = \alpha^i g_i'$ where $g_i' \in D_0$. The proof is then straightforward from that of Theorem~\ref{thm-const3}.  
  \end{proof}

  \begin{remark}\label{rmk-coro}
    The construction in Corollary~\ref{coro-const3} can be viewed as a generalization of the existing constructions in~\cite{DMY07,DY08,GMY09} (if $d_0 = d_1 = \cdots = d_{l-1}$). Furthermore, Theorem~\ref{thm-lc} in Section~\ref{sec-app} indicates that the construction in Theorem~\ref{thm-const3} can really generate many new classes of sets of ZDB functions.
  \end{remark}

To illustrate the generic construction in Corollary~\ref{coro-const3}, we give the following example. 

\begin{example}\label{exm-zdbset1}
  Let $q = 3^2$, $m = 3$, $l = r = 2$, $e = 4$, $u = 1$, and $\theta$ be a root of the irreducible polynomial $x^6 + 2 x^4 + x^2 + 2x + 2 \in \gf_3[x]$. Define $\rho(t)$ as 
\begin{equation*}
  \rho(t) := \left\{ \begin{array}{ll}
    \theta^4, & \textrm{ if $rt \equiv 0 \pmod{e}$,}\\
    \theta^8, & \textrm{ if $rt \equiv r \pmod{e}$.}
  \end{array}\right.
\end{equation*}
Then the set of ZDB functions is defined as
$$
\cals := \{ f_0, \ f_1 \},
$$
where $f_0(t) := \tr\left( \rho(t) \theta^{rt}\right)$, and $f_1(t) := \tr\left( \theta^{91} \rho(t) \theta^{rt}\right)$. The $f_i(t)$ is a $(364, 9, 40)$-ZDB function for $i = 1, 2$, and 
$$
\big| \{ t \in \bZ_{364} : f_0(t+a) - f_1 (t) = 0 \} \big| = 40,
$$
for each $a \in \bZ_{364}$.
\end{example}

\subsection{ZDB functions with flexible parameters}

%\subsection{A vectorial construction of ZDB functions}

In~\cite{ZTWY12}, difference balanced functions were used to construct ZDB functions with flexible parameters. It turns out that the functions given in Theorem~\ref{thm-const1} could also be employed to construct ZDB functions with parameters $\left( \frac{q^m-1}{r}, q^v, \frac{q^{m-v} - 1}{r} \right)$, and further can generate a set of ZDB functions with such parameters.  

\begin{theorem}\label{thm-const2}
  With the same notations as in Theorem~\ref{thm-const1}, suppose that $f(t) = \tr(\rho(t) \theta^{rut})$ is a $\left( \frac{q^m - 1}{r}, q, \frac{q^{m-1}-1}{r} \right)$-ZDB function from $(\bZ_n,+)$ onto $(\gf_q,+)$ defined in Theorem~\ref{thm-const1}, where $n = \frac{q^m - 1}{r}$. Let $a_0, a_1, \ldots, a_{v-1}$ be $v$ elements in $\gf_{q^m}^*$, which are linearly independent over $\gf_q$. Define the function $f_v: (\bZ_n, +) \rightarrow (\gf_q,+)^v$ as
  $$
  f_v(t) := \left( \tr(a_0 \rho(t) \theta^{rut}), \tr(a_1 \rho(t)\theta^{rut}), \ldots, \tr(a_{v-1} \rho(t) \theta^{rut}) \right),
  $$
  then the function $f_v(t)$ is a ZDB function with parameters $\left( \frac{q^m-1}{r}, q^v, \frac{q^{m-v} -1}{r} \right)$. 
\end{theorem}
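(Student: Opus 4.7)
The plan is to mirror the proof of Theorem~\ref{thm-const1}, with the scalar equation $\tr(\cdot)=0$ replaced by the $v$ simultaneous equations defining the vanishing of $f_v$. The technical core is a vector-valued refinement of Lemma~\ref{lem-pre2}: for each nonzero $a\in\gf_{q^m}^*$ and each $0\le j<l$,
\[
\big|\{x\in C_j : \tr(a_i a x^u)=0 \text{ for all } 0\le i<v\}\big| \;=\; \frac{q^{m-v}-1}{l\cdot r}.
\]
To establish this, I would first note that $\gf_q$-linear independence of $a_0,\ldots,a_{v-1}$ combined with nondegeneracy of the trace pairing implies $\gf_q$-linear independence of the functionals $\tr(a_i\cdot)$, so the $\gf_q$-linear map $x\mapsto(\tr(a_0 ax),\ldots,\tr(a_{v-1}ax))$ from $\gf_{q^m}$ onto $\gf_q^v$ has kernel of exactly $q^{m-v}$ elements. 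Replacing $x$ by $x^u$ (which is $u$-to-one on $\gf_{q^m}^*$ since $u\mid q-1\mid q^m-1$) and applying Corollary~\ref{coro-pre} twice to the $\gf_q$-homogeneous joint-trace map --- first descending from $\gf_{q^m}^*$ to $D_0$ and then from $D_0$ to $C_j$ --- distributes the zeros equally among the cosets and yields the claimed count, in direct parallel with Lemma~\ref{lem-pre2}.

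Equipped with this refinement, I would then repeat the calculation from the proof of Theorem~\ref{thm-const1}. For each nonzero $a\in\bZ_n$ with $\theta^{ra}\in C_k$, writing $x=\theta^{rt}\in C_j$,
\[
N_0(a) \;=\; \sum_{j=0}^{l-1} \big|\{ x\in C_j : \tr(a_i(d_{k+j}\theta^{rau}-d_j)x^u)=0 \text{ for all } 0\le i<v \}\big|.
\]
Conditions~(i) and~(ii) of Theorem~\ref{thm-const1}, which by hypothesis are satisfied by $\rho$, guarantee $d_{k+j}\theta^{rau}-d_j\neq 0$ in both cases $k=0$ and $k\neq 0$, exactly as in the original proof. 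Applying the refined lemma to each of the $l$ summands then gives $N_0(a)=l\cdot\frac{q^{m-v}-1}{lr}=\frac{q^{m-v}-1}{r}$, as required.

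The main obstacle will be the vector-valued kernel count, but once $\gf_q$-surjectivity of the joint-trace map is in hand, the remaining applications of Corollary~\ref{coro-pre} are essentially bookkeeping and should go through verbatim as in Lemma~\ref{lem-pre2}. A secondary subtlety is verifying that $f_v$ actually surjects onto $\gf_q^v$ so that the codomain really has size $q^v$; this should follow from the same $\gf_q$-linear independence of $a_0,\ldots,a_{v-1}$, either by directly exhibiting preimages or by combining the computed $N_0(a)$ with Lemma~\ref{lem-zdbcha}.
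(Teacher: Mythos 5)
Your proof is correct and takes essentially the route the paper intends: the paper gives no written proof of this theorem, remarking only that it is ``similar to the proof of Theorem~\ref{thm-const1}, using the result on the number of solutions of linear systems,'' and your vector-valued refinement of Lemma~\ref{lem-pre2} via the $q^{m-v}$-element kernel of the joint trace map $x\mapsto(\tr(a_0x),\ldots,\tr(a_{v-1}x))$ is exactly that plan carried out. Your closing remark about surjectivity onto $\gf_q^v$ is a legitimate detail that the paper itself does not address.
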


Similar to the proof of Theorem~\ref{thm-const1}, using the result on the number of solutions of linear systems, one can easily give a proof for Theorem~\ref{thm-const2}.

%It turns out that the construction of sets of ZDB functions can also be extended to have flexible parameters $\left( \frac{q^m-1}{r}, q^v, \frac{q^{m-v}-1}{r} \right)$ in a vectorial way.

\begin{corollary}\label{coro-vec}
  Suppose that $\cals = \{f_0, f_1, \ldots, f_{r-1} \}$ is the set of ZDB functions constructed in Corollary~\ref{coro-const3}, i.e., $f_i (t) = \tr(g_i \rho(t) \theta^{rut})$, where $\{g_0, g_1, \ldots, g_{r-1} \}$ is a complete set of representatives for the cyclotomic classes of order $r$ in $\gf_{q^m}$. Let $a_0, a_1, \ldots, a_{v-1}$ be $v$ elements in $\gf_{q^m}^*$, which are linearly independent over $\gf_q$. Define the set $\cals '$ of ZDB functions as $\cals ' := \{  f_0', f_1', \ldots, f_{r-1}' \}$, where $f_i': (\bZ_n, +) \rightarrow (\gf_q,+)^v$ is
  $$
  f_i'(t) := \left(\tr( a_0 g_i \rho(t) \theta^{rut}), \tr( a_1 g_i \rho(t) \theta^{rut} ), \ldots, \tr(a_{v-1} g_i \rho(t) \theta^{rut} ) \right) .
  $$
  Then the set $\cals '$ is a set of r ZDB functions with parameters $\left( \frac{q^m-1}{r}, q^v, \frac{q^{m-v}-1}{r} \right)$, and any two distinct functions $f_{i_1}'(t), f_{i_2}'(t) \in \cals '$ satisfy
$$
\big| \{ t \in \bZ_n : f_{i_1}'( t + a ) - f_{i_2} '(t) = 0 \} \big| = \frac{q^{m-v} - 1}{r},
$$
for $0 \leq i_1 \ne i_2 < r$ and every $a \in \bZ_n$.
\end{corollary}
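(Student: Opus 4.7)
The plan is to reduce Corollary~\ref{coro-vec} to a vector-valued refinement of Lemma~\ref{lem-pre2} and then to run the coset-placement argument from the proof of Theorem~\ref{thm-const3}. The first assertion, that each $f_i'$ is a $\bigl(\frac{q^m-1}{r}, q^v, \frac{q^{m-v}-1}{r}\bigr)$-ZDB function, follows immediately from Theorem~\ref{thm-const2} applied to $(a_0 g_i, \ldots, a_{v-1} g_i)$: multiplication by the nonzero scalar $g_i$ preserves $\gf_q$-linear independence, so $f_i'$ fits the form covered by that theorem.

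For the cross-correlation claim I would first establish the following vector analogue of Lemma~\ref{lem-pre2}: for any $c_0, \ldots, c_{v-1} \in \gf_{q^m}$ linearly independent over $\gf_q$ and each $0 \le j < l$,
\[
\bigl|\{x \in C_j : \tr(c_s x^u) = 0 \text{ for all } s\}\bigr| = \frac{q^{m-v}-1}{lr}.
\]
Its proof mirrors Lemma~\ref{lem-pre2}: linear independence of the $c_s$'s forces the common trace kernel $\{y \in \gf_{q^m} : \tr(c_s y) = 0 \text{ for all } s\}$ to be a $\gf_q$-subspace of codimension exactly $v$, hence of cardinality $q^{m-v}$. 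A routine extension of Corollary~\ref{coro-pre} to systems of $u$-homogeneous functions (the joint zero set is stable under multiplication by $\alpha^{ir}$) then equidistributes these zeros over the cosets $C_0, \ldots, C_{l-1}$ inside $\langle \theta^r \rangle$.

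With the vector lemma in hand, the cross-correlation calculation parallels the proof of Theorem~\ref{thm-const3}. Writing $x = \theta^{rt}$ and letting $\theta^{ra} \in C_k$, the equation $f_{i_1}'(t+a) - f_{i_2}'(t) = 0$ restricted to $x \in C_j$ becomes $\tr(a_s \, b_j \, x^u) = 0$ for all $s$, where
\[
b_j := g_{i_1} d_{k+j} \theta^{rau} - g_{i_2} d_j.
\]
Since $d_j, d_{k+j} \in D_0$ and $\theta^{rau} \in \langle \theta^r \rangle = D_0$ regardless of $k$, the first summand lies in $D_{i_1}$ while the second lies in $D_{i_2}$; as $i_1 \ne i_2$ in $[0, r)$ these cosets are disjoint, so $b_j \ne 0$. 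Applying the vector lemma with $c_s = a_s b_j$ (linear independence is preserved under multiplication by the nonzero scalar $b_j$) to each $C_j$ and summing over $j = 0, \ldots, l-1$ yields $l \cdot \frac{q^{m-v}-1}{lr} = \frac{q^{m-v}-1}{r}$, as required.

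The main (modest) obstacle is the vector refinement of Lemma~\ref{lem-pre2}. Its proof rests on two ingredients: linear independence of the $c_s$'s to pin down the codimension-$v$ kernel, and the extension of Corollary~\ref{coro-pre} from one homogeneous equation to a system. Both are straightforward once made explicit, but the former is precisely where the hypothesis that the $a_s$'s are linearly independent over $\gf_q$ is crucially used.
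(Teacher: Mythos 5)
Your proof is correct. The paper states Corollary~\ref{coro-vec} without proof, leaving it as the evident combination of Theorem~\ref{thm-const2} (itself only sketched) and Corollary~\ref{coro-const3}; your reconstruction supplies exactly the intended argument, namely a vector-valued analogue of Lemma~\ref{lem-pre2} (codimension-$v$ joint trace kernel, equidistributed over the cosets $C_j$ by the homogeneity argument of Corollary~\ref{coro-pre}) fed into the coset computation of Theorem~\ref{thm-const3}. The one place you genuinely diverge is the nonvanishing of $b_j = g_{i_1} d_{k+j}\theta^{rau} - g_{i_2} d_j$: the paper's proof of Theorem~\ref{thm-const3} splits into $k=0$ (handled by the congruence computation in (\ref{eqn-setprf1})--(\ref{eqn-setprf2})) and $k\ne 0$ (handled by disjointness of the $D_i$), whereas your observation that $\theta^{rau}\in D_0$ for every $k$ makes the disjoint-coset argument work uniformly and renders the case split unnecessary. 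That is a mild but genuine simplification; everything else, including the use of $d_i\in D_0$ and the hypothesis that the $a_s$ are $\gf_q$-linearly independent, is deployed exactly where the paper needs it.
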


With a set of ZDB functions, using the idea in~\cite[Theorem 6]{ZTWY12}, we can give a new construction of ZDB functions with more flexible parameters. 

\begin{theorem}\label{thm-const4}
  Suppose that $f_0', f_1', \ldots, f_{k-1}'$ are any $k$ functions in the set of ZDB functions constructed in Corollary~\ref{coro-vec} with $1 \leq k \leq r$ and $\gcd(k, n) = 1$ where $n = \frac{q^m-1}{r}$. Define the function $f: (\bZ_{kn},+) \rightarrow (\gf_q^v,+)$ as $f(t) := f_i'(j)$, where $t = j k + i$ with $j \in \bZ_n$ and $i \in \bZ_k$. Then $f(t)$ is a $\left( k \frac{q^m-1}{r}, q^v, k \frac{q^{m-v} -1 }{r} \right)$-ZDB function.
\end{theorem}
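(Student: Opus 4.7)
The plan is to reduce the counting problem on $\bZ_{kn}$ to counting problems on $\bZ_n$ and then apply Corollary~\ref{coro-vec} in a uniform way. I would use the division-with-remainder bijection $t \leftrightarrow (j,i)$ given by $t = jk + i$ with $j \in \bZ_n$ and $i \in \bZ_k$, and write any nonzero $a \in \bZ_{kn}$ as $a = bk + c$ with $b \in \bZ_n$ and $c \in \bZ_k$. A short computation modulo $kn$ then shows that $t + a$ corresponds to the pair $(j',i')$ with $i' = (i+c) \bmod k$ and $j' = (j + b + \epsilon) \bmod n$, where $\epsilon \in \{0,1\}$ is the carry that equals $1$ precisely when $i + c \ge k$. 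Under these identifications, the condition $f(t+a) = f(t)$ becomes $f_{i'}'(j') = f_i'(j)$.

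Next I would split the count $N_0(a) = |\{t \in \bZ_{kn}: f(t+a) = f(t)\}|$ according to whether $c = 0$ or $c \ne 0$. When $c = 0$, nonvanishing of $a$ forces $b \ne 0$ in $\bZ_n$, $\epsilon = 0$, and $i' = i$; for each fixed $i$, the ZDB property of the vector-valued function $f_i'$ given in Corollary~\ref{coro-vec} yields exactly $\frac{q^{m-v}-1}{r}$ values of $j \in \bZ_n$ with $f_i'(j+b) = f_i'(j)$. When $c \ne 0$, the reduction modulo $k$ forces $i' \ne i$; since $k \le r$, both indices lie in $\{0,\ldots,k-1\} \subseteq \{0,\ldots,r-1\}$, so $f_i'$ and $f_{i'}'$ are distinct members of $\cals'$, and the cross-correlation assertion of Corollary~\ref{coro-vec} applied to the shift $b + \epsilon \in \bZ_n$ again produces exactly $\frac{q^{m-v}-1}{r}$ values of $j$ satisfying $f_{i'}'(j + b + \epsilon) = f_i'(j)$, independent of the specific shift.

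Summing over $i \in \bZ_k$ in either branch gives $N_0(a) = k \cdot \frac{q^{m-v}-1}{r}$, independently of the nonzero $a$, so $f$ is a $\left(k \cdot \frac{q^m-1}{r},\, q^v,\, k \cdot \frac{q^{m-v}-1}{r}\right)$-ZDB function as asserted. The only delicate piece of bookkeeping is tracking the carry $\epsilon$ when $i + c \ge k$; once this is in hand, the case analysis collapses uniformly onto Corollary~\ref{coro-vec}, whose same-function statement handles the $c=0$ branch and whose distinct-function statement handles the $c \ne 0$ branch. The hypothesis $\gcd(k,n)=1$, together with $k \le r$, is used to guarantee that the representation $a = bk + c$ and the shift-by-$a$ action on $(j,i)$ interact cleanly with the cyclic structure of $\bZ_{kn}$, so that the two branches exhaust all nonzero $a$.
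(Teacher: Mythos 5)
Your proposal is correct and follows essentially the same route as the paper's own proof: decompose $a = bk + c$ via division with remainder, track the carry $\epsilon$ when $i+c \ge k$, and apply the autocorrelation (respectively cross-correlation) statement of Corollary~\ref{coro-vec} in the $c=0$ (respectively $c \ne 0$) branch. The paper's two sums $\sum_{i=0}^{k-1-a_2}$ and $\sum_{i=k-a_2}^{k-1}$ are exactly your carry bookkeeping made explicit, so there is nothing further to add.
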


\begin{proof}
  For each nonzero $a \in \bZ_{kn}$, since $\gcd(k,n) = 1$, we may write $a = a_1k + a_2$ where $(a_1, a_2) \in \bZ_n \times \bZ_k$ and $a_1 \ne 0$ or $a_2 \ne 0$. Note that
  \begin{eqnarray*}
    \lefteqn{\big| \{  t \in \bZ_{kn}: f(t + a) - f(t) = 0 \} \big| } \\
    & = & \big| \{ (j, i) \in \bZ_n \times \bZ_k : f(jk + i + a_1k + a_2) - f(jk + i) = 0 \} \big| .
  \end{eqnarray*}
If $a_2 = 0$ and $a_1 \ne 0$, we have
\begin{eqnarray*}
  \lefteqn{\big| \{ t \in \bZ_{kn} : f(t+a) - f(t) = 0 \} \big| } \\
  & =& \sum_{i=0}^{k-1} \big| \{ j \in \bZ_n : f_i'(j+a_1) - f_i'(j) = 0 \} \big| \\
  & = & k \frac{q^{m-v} - 1}{r} .
\end{eqnarray*}
If $a_2 \ne 0$, we have
\begin{eqnarray*}
  \lefteqn{\big| \{ t \in \bZ_{kn} : f(t+a) - f(t) = 0 \} \big| } \\
  & =& \sum_{i=0}^{k-1-a_2} \big| \{ j \in \bZ_n : f_{i+a_2}'(j+a_1) - f_i'(j) = 0 \} \big| \\ 
  & & \ + \sum_{i=k - a_2}^{k-1} \big| \{ j \in \bZ_n : f_{i+a_2-k}'(j+a_1+1) - f_i'(j) = 0 \} \big| \\
  & = & k \frac{q^{m-v} - 1}{r} .
\end{eqnarray*}
The proof is then completed.
\end{proof}

\section{Two applications of sets of ZDB functions}\label{sec-app}

%ZDB functions are very useful in many applications. For example, there are standard methods to construct optimal and perfect difference systems of sets~\cite{Ding09,ZTWY12}, optimal constant composition codes~\cite{Ding08,DY05,DY05}, and optimal frequency hopping sequences~\cite{DMY07,DY08,FMM04,GFM06,GMY09}. For recent surveys on the applications of ZDB functions, we refer to~\cite{DT12,ZTWY12}. 
In this section, we present two applications of sets of ZDB functions: one is optimal sets of frequency hopping (FH) sequences, and the other is optimal constant weight codes. In the literature, ZDB functions or corresponding PDFs have been used to construct optimal frequency-hopping sequences~\cite{DMY07,DY08,FMM04,GFM06,GMY09}.

\subsection{Optimal sets of frequency hopping sequences}

In frequency hopping (FH) CDMA communication systems, a transmitter changes its carrier frequency at regular intervals as prescribed by an FH sequence~\cite{SOSL02}. Let $B = \{b_0, b_1, \ldots, b_{\ell - 1}\}$ be a set of available frequencies (also called {\em alphabet}) and $(s_0, s_1, \ldots, s_{n-1})$ be an FH sequence of length $n$ over $B$, where $s_i \in B$. In FH CDMA communication systems, long messages are transmitted by repeating the FH sequence as often as necessary. For any two FH sequences $X, Y$ of length $n$ over $B$, their Hamming correlation $H_{X, Y}$ is defined as
$$
H_{X,Y}(t) := \sum^{n-1}_{i=0} h[x_i, y_{i+t}], \quad 0 \leq t < n
$$
where $h[a,b] = 1$ if $a = b$, and $0$ otherwise, and all operations among the position indices are performed modulo $n$. To maximize the throughput, the Hamming correlation is required as small as possible. For one single FH sequence, in 1974, Lempel and Greenberger developed the following lower bound~\cite{LG74}. 

\begin{lemma}\label{lem-fhsbound}
For every FH sequence $X$ of length $n$ over an alphabet of size $\ell$, define
\begin{equation*}
H(X) := \max_{1 \leq t < n} \{H_{X,X} (t) \},
\end{equation*}
then
\begin{equation}\label{eqn-fhsbound}
H(X) \geq \left\lceil \frac{(n - \epsilon) ( n + \epsilon - \ell)}{\ell (n - 1)} \right\rceil ,
\end{equation}
where $\epsilon$ is the least nonnegative residue of $n$ modulo $\ell$.
\end{lemma}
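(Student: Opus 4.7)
The plan is to replay the argument used in the proof of Lemma~\ref{lem-lowbound}, which also explains the coincidence noted in Remark~\ref{rmk-lowbound} that the two bounds take identical form. Given an FH sequence $X = (x_0, x_1, \ldots, x_{n-1})$ over $B = \{b_0, b_1, \ldots, b_{\ell-1}\}$, let $\tau_i := |\{ 0 \le j < n : x_j = b_i \}|$ for $0 \le i < \ell$, so that $\sum_{i=0}^{\ell - 1} \tau_i = n$.

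First I would establish the double-counting identity
\[
\sum_{t=0}^{n-1} H_{X,X}(t) = \sum_{i=0}^{\ell - 1} \tau_i^2,
\]
by exchanging summation order: fixing any $j \in \bZ_n$ and summing $h[x_j, x_{j+t}]$ over $t \in \bZ_n$ counts the number of positions with the same symbol as $x_j$, contributing $\tau_i$ whenever $x_j = b_i$; summing over $j$ then groups contributions by symbol to give $\sum_i \tau_i^2$. Since $H_{X,X}(0) = n$, isolating the $t=0$ term yields $\sum_{t=1}^{n-1} H_{X,X}(t) = \sum_i \tau_i^2 - n$.

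Second, since the maximum of $n-1$ nonnegative quantities is at least their average,
\[
H(X) \geq \frac{1}{n-1}\left( \sum_{i=0}^{\ell - 1} \tau_i^2 - n \right).
\]
Third, I would invoke the same integer-programming step used in Lemma~\ref{lem-lowbound}: subject to $\sum_i \tau_i = n$ with $\tau_i \in \bZ_{\geq 0}$, the quantity $\sum_i \tau_i^2$ is minimized when the distribution is as balanced as possible, i.e., when $\ell - \epsilon$ of the $\tau_i$'s equal $k$ and the remaining $\epsilon$ equal $k+1$, where $n = k\ell + \epsilon$. The short calculation performed there gives $\min \sum_i \tau_i^2 - n = (n - \epsilon)(n + \epsilon - \ell)/\ell$. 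Since $H(X)$ is a nonnegative integer, applying the ceiling function produces the stated bound.

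There is no genuine obstacle here: the whole argument is a clean double-count followed by averaging together with the balanced-partition optimization already handled in Lemma~\ref{lem-lowbound}. The only care-point is to ensure that the double-counting identity correctly treats the cyclic indexing modulo $n$, which is immediate from the fact that $j \mapsto j+t$ is a bijection of $\bZ_n$ for every $t$.
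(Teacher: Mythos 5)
Your proof is correct. The paper does not actually prove this lemma itself --- it quotes it from Lempel and Greenberger \cite{LG74} --- but your argument (the double-counting identity $\sum_{t=0}^{n-1} H_{X,X}(t) = \sum_{i} \tau_i^2$, isolating the $t=0$ term, bounding the maximum by the average over the $n-1$ nonzero shifts, and then minimizing $\sum_i \tau_i^2$ at the balanced partition) is the standard one, and it is precisely the same skeleton the paper uses to prove its own Lemma~\ref{lem-lowbound} via Lemma~\ref{lem-zhbcha} if one replaces that label by Lemma~\ref{lem-zdbcha}; this shared structure is exactly why the two bounds coincide, as noted in Remark~\ref{rmk-lowbound}.
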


Let $(n, \ell, \lambda)$ denote an FH sequence $X$ of length $n$ over an alphabet of size $\ell$ with $\lambda = H(X)$. In Section~\ref{sec-char}, the lower bound on $\lambda$ of ZDB functions in Lemma~\ref{lem-lowbound}, in fact coincides with the lower bound of (\ref{eqn-fhsbound}). A set $\calf$ of FH sequences is call {\em optimal}, if one of the following bounds on $M(\calf)$ is met, where
$$
M(\calf) := \max \left\{ \max_{X \in \calf} H(X), \max_{X,Y \in \calf, X \ne Y} H(X,Y) \right\} ,
$$
and $H(X,Y) := \max_{0 \leq t < n} \{ H_{X,Y} (t) \}$. By convention, let $(n, N, \lambda; \ell)$ denote a set of $N$ FH sequences of length $n$ over an alphabet of size $\ell$, where $\lambda = M(\calf)$.

\begin{lemma}\cite{PF04,Sar05}\label{lem-fhssetbound}
  Let $\calf$ be a set of $N$ sequences of length $n$ over an alphabet size of $\ell$. Define $I := \lfloor nN / \ell \rfloor$. Then
  $$
  M(\calf) \geq \left\lceil \frac{(nN - \ell) n}{(nN - 1) \ell} \right\rceil 
  $$
  and
  $$
  M(\calf) \geq \left\lceil \frac{2I n N - (I+1) I \ell}{(nN - 1)N} \right\rceil .
  $$
\end{lemma}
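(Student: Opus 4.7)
The plan is to use a double-counting argument on pairs of positions carrying identical symbols, applied to the full ensemble $\{x_t^{(X)}: X\in\calf,\ 0\le t<n\}$ of $nN$ symbols formed by concatenating every sequence of $\calf$ with every cyclic shift. For each alphabet symbol $b_k\in B$, let $m_k$ denote the number of occurrences of $b_k$ across this ensemble; then $\sum_{k=0}^{\ell-1} m_k = nN$, and the number of ordered pairs of positions having the same symbol equals $\sum_{k=0}^{\ell-1} m_k^2$. On the other hand, by grouping such pairs according to which two sequences they come from and the cyclic offset between their positions, one obtains the identity
\begin{equation*}
\sum_{k=0}^{\ell-1} m_k^2 \;=\; \sum_{X,Y\in\calf}\sum_{t=0}^{n-1} H_{X,Y}(t).
\end{equation*}

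Next, I would separate the $N$ trivial contributions $H_{X,X}(0)=n$ from the remaining $nN^2 - N = N(nN-1)$ Hamming-correlation values, each of which is at most $M(\calf)$ by definition. This gives the master inequality
\begin{equation*}
\sum_{k=0}^{\ell-1} m_k^2 - nN \;\le\; N(nN-1)\,M(\calf),
\end{equation*}
so any lower bound on $\sum_k m_k^2$ translates directly into a lower bound on $M(\calf)$.

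For the first bound, I would apply the convexity inequality $\sum_k m_k^2 \ge (nN)^2/\ell$ (Cauchy--Schwarz, or equivalently the fact that $\sum m_k^2$ is minimized when all $m_k$ are equal). Substituting and simplifying yields $M(\calf)\ge n(nN-\ell)/(\ell(nN-1))$, and taking ceilings (since $M(\calf)$ is an integer) gives the first stated bound. For the second, sharper bound, I would exploit the integrality of the $m_k$: writing $nN = I\ell + s$ with $0\le s<\ell$ and $I=\lfloor nN/\ell\rfloor$, the minimum of $\sum_k m_k^2$ over nonnegative integers summing to $nN$ is attained when $s$ of the $m_k$ equal $I+1$ and the remaining $\ell-s$ equal $I$. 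A short calculation gives $\sum_k m_k^2 = nN(2I+1) - I(I+1)\ell$, and substituting into the master inequality yields $M(\calf)\ge \lceil (2InN - I(I+1)\ell)/(N(nN-1))\rceil$, matching the claim.

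The main obstacle is the bookkeeping in the identity $\sum_k m_k^2 = \sum_{X,Y,t} H_{X,Y}(t)$: one must verify that the ordered-pair counting on the left exactly matches the ordered triple $(X,Y,t)$ indexing on the right, so that removing precisely the $N$ trivial self-coincidences $H_{X,X}(0)=n$ (and no others) produces the factor $N(nN-1)$ rather than something like $N^2(n-1)$. Once this indexing is pinned down carefully, everything else is elementary algebra and an application of the ceiling function.
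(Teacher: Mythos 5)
Your argument is correct. Note that the paper itself gives no proof of this lemma---it is quoted directly from Peng--Fan \cite{PF04} (with Sarwate's correction \cite{Sar05})---so there is nothing internal to compare against; what you have written is essentially the standard proof from those references. The two points you flag as needing care both check out: for fixed $X,Y$ the map $(i,t)\mapsto(i,i+t \bmod n)$ is a bijection of $\bZ_n\times\bZ_n$, so $\sum_{t}H_{X,Y}(t)$ counts exactly the ordered pairs of positions of $X$ and $Y$ carrying equal symbols, and summing over all ordered $(X,Y)$ gives $\sum_k m_k^2$; and the only terms not bounded by $M(\calf)$ are the $N$ self-coincidences $H_{X,X}(0)=n$ (since $H(X)$ excludes $t=0$ while $H(X,Y)$ for $X\ne Y$ includes it), leaving precisely $N(nN-1)$ bounded terms. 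The convexity step $\sum_k m_k^2\ge (nN)^2/\ell$ and the integral refinement $\sum_k m_k^2\ge nN(2I+1)-I(I+1)\ell$ then yield the two displayed bounds after the elementary algebra you indicate.
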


By the definition of sets of ZDB functions, we have the following bridge between sets of ZDB functions and sets of FH sequences.

\begin{lemma}\label{lem-zdbfhsset}
  Suppose that $\cals = \{f_0, f_1, \ldots, f_{N-1} \}$ is a set of $N$ $(n, \ell, \lambda)$-ZDB functions from $(\bZ_n,+)$ onto an abelian group $(B, +)$ of order $\ell$. Define the sequence set $\calf := \{ {\bf s}_0, {\bf s}_1, \ldots, {\bf s}_{N-1} \} $, where $s_i(t) := f_i(t)$ for $0 \leq i < N$ and $0 \leq t < n$. Then $\calf$ is an $(n, N, \lambda; \ell)$ set of FH sequences. 
\end{lemma}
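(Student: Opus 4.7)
The plan is to translate the statement through the direct dictionary between ZDB functions and FH sequences: rewrite each Hamming correlation as a zero-difference count, and then invoke the autocorrelation part of the ZDB property of each $f_i$ together with the uniform cross-correlation property built into the definition of a set of ZDB functions (cf.\ Theorem~\ref{thm-const3} and Corollary~\ref{coro-const3}). Nothing deeper than an index manipulation is required; the content of the lemma is precisely that the two notions of correlation agree under this dictionary.

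Concretely, the first step is to unpack the definition of Hamming correlation. For any $0 \leq i, j < N$ and $0 \leq t < n$,
$$
H_{\mathbf{s}_i, \mathbf{s}_j}(t) \;=\; \sum_{k=0}^{n-1} h[s_i(k), s_j(k+t)] \;=\; \big| \{ k \in \bZ_n : f_j(k+t) - f_i(k) = 0 \} \big|,
$$
where position indices are taken modulo $n$, which is consistent with the group $(\bZ_n,+)$ hosting the ZDB functions. The second step is a case split. When $i = j$ and $1 \leq t < n$, the right-hand side is precisely the quantity $N_0(t)$ for the ZDB function $f_i$, which equals $\lambda$ by definition; hence $H(\mathbf{s}_i) = \lambda$ for every $i$. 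When $i \neq j$ and $0 \leq t < n$ (now including $t=0$, since $H(X,Y)$ ranges over all shifts), the right-hand side is precisely the uniform cross-count $\big|\{k : f_j(k+t) - f_i(k) = 0\}\big| = \lambda$ that is part of the hypothesis on the set $\cals$; hence $H(\mathbf{s}_i, \mathbf{s}_j) = \lambda$ for all distinct $i, j$. Taking the maximum over both types of correlations yields $M(\calf) = \lambda$, so $\calf$ is an $(n, N, \lambda; \ell)$ set of FH sequences, as required.

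The only points that require care are notational rather than mathematical: ensuring that the modular position arithmetic in $H_{X,Y}$ agrees with the additive structure of $\bZ_n$ on which the $f_i$ are defined, and noting that the phrase ``set of $(n,\ell,\lambda)$-ZDB functions'' is being used in the sense developed in Section~\ref{sec-exd}, i.e., with the uniform cross-correlation clause $|\{t : f_{i_1}(t+a) - f_{i_2}(t) = 0\}| = \lambda$ included. Once these conventions are fixed, the proof is a one-step verification and there is no genuine obstacle; the lemma simply records the bridge that enables the sets of ZDB functions constructed above to be transported to optimal FH sequence sets via the bounds of Lemma~\ref{lem-fhsbound} and Lemma~\ref{lem-fhssetbound}.
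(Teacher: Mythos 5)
Your proposal is correct and matches the paper's intent exactly: the paper states this lemma as an immediate consequence of the definitions (``By the definition of sets of ZDB functions\dots'') and gives no separate proof, and your argument is precisely the definitional unpacking being invoked, including the correct observation that the autocorrelation maximum excludes $t=0$ while the cross-correlation maximum includes it, which is covered because the uniform cross-count holds for every shift $a \in \bZ_n$.
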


Using our construction of sets of ZDB functions, we can construct optimal sets of FH sequences, of which each FH sequence is also optimal with respect to the bound of (\ref{eqn-fhsbound}).

\begin{theorem}\label{thm-fhsset}
  Suppose that $\cals = \{f_0, f_1, \ldots, f_{r-1}\}$ is the set of ZDB functions constructed in Corollary~\ref{coro-vec}. Define the set of sequences 
  $$
  \calf := \{ {\bf s}_0, {\bf s}_1, \ldots, {\bf s}_{r-1} \},
  $$
  where $s_i(t) := f_i(t)$ for $0 \leq i < r$ and $0 \leq t < \frac{q^m-1}{r}$. Then $\calf$ is an optimal set of FH sequences with parameters $\left(\frac{q^m-1}{r}, r, \frac{q^{m-v} -1}{r}; q^v \right)$. Furthermore, each ${\bf s}_i$ for $0 \leq i <r$ is
  an optimal $\left( \frac{q^m-1}{r}, q^v, \frac{q^{m-v}-1}{r} \right)$ FH sequence. 
\end{theorem}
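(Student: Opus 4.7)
The plan has three stages: identify the parameters, check the individual Lempel--Greenberger bound, and check the set bound in Lemma~\ref{lem-fhssetbound}.

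First, I apply Corollary~\ref{coro-vec} together with Lemma~\ref{lem-zdbfhsset}. Every $f_i' \in \cals'$ is an $\bigl(\tfrac{q^m-1}{r},q^v,\tfrac{q^{m-v}-1}{r}\bigr)$-ZDB function, and for any $i_1\neq i_2$ and $a$,
$\bigl|\{t\in\bZ_n: f_{i_1}'(t+a)-f_{i_2}'(t)=0\}\bigr|=\tfrac{q^{m-v}-1}{r}$. Thus $\calf$ is an $\bigl(\tfrac{q^m-1}{r},r,\tfrac{q^{m-v}-1}{r};q^v\bigr)$ set of FH sequences, and each $\mathbf{s}_i$ satisfies $H(\mathbf{s}_i)=\tfrac{q^{m-v}-1}{r}$. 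So it only remains to show that this common value meets the relevant optimality bounds.

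Next, the individual bound (\ref{eqn-fhsbound}). Writing $n=\tfrac{q^m-1}{r}$ and $\ell=q^v$, the fact that $r\mid q-1$ forces $r\mid q^v-1$ and $r\mid q^{m-v}-1$, so the division algorithm gives $n=k\ell+\epsilon$ with $k=\tfrac{q^{m-v}-1}{r}$ and $\epsilon=\tfrac{q^v-1}{r}$. A direct substitution yields
\begin{equation*}
  \frac{(n-\epsilon)(n+\epsilon-\ell)}{\ell(n-1)}
  =\frac{(q^{m-v}-1)\bigl(q^m-(r-1)q^v-2\bigr)}{r(q^m-1-r)}.
\end{equation*}
Clearing denominators, this real number lies in the half-open interval $\bigl(\tfrac{q^{m-v}-1}{r}-1,\ \tfrac{q^{m-v}-1}{r}\bigr]$: the upper inequality reduces to $(r-1)(q^v-1)\ge 0$, and the lower one reduces to $(r-1)(q^{m-v}-1)(q^v-1)<r(q^m-1-r)$, which is a routine expansion (trivial when $r=1$; otherwise follows from $q^m\ge rq^v$). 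Since $\tfrac{q^{m-v}-1}{r}$ is an integer, the ceiling of the bound is exactly $\tfrac{q^{m-v}-1}{r}=H(\mathbf{s}_i)$, proving each sequence is individually optimal.

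Finally, the set bound. With $nN=q^m-1$ and $\ell=q^v$, one obtains $I=\lfloor nN/\ell\rfloor=q^{m-v}-1$. Substituting into the second expression of Lemma~\ref{lem-fhssetbound} gives
\begin{equation*}
  \frac{2InN-(I+1)I\ell}{(nN-1)N}
  =\frac{(q^{m-v}-1)\bigl[2(q^m-1)-q^{m-v}q^v\bigr]}{r(q^m-2)}
  =\frac{q^{m-v}-1}{r},
\end{equation*}
which coincides with $M(\calf)=\tfrac{q^{m-v}-1}{r}$ guaranteed by the uniform correlation value from Corollary~\ref{coro-vec}. Hence $\calf$ meets the bound and is optimal. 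The main delicate point is the ceiling argument in the middle paragraph, where one must verify that the Lempel--Greenberger bound is strictly greater than $\tfrac{q^{m-v}-1}{r}-1$ but no greater than $\tfrac{q^{m-v}-1}{r}$; everything else is a straightforward substitution into formulas already available.
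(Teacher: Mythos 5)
The paper states Theorem~\ref{thm-fhsset} without any proof, and your argument supplies exactly the verification the authors evidently intend: read off the parameters and the uniform auto/cross-correlation value $\frac{q^{m-v}-1}{r}$ from Corollary~\ref{coro-vec} via Lemma~\ref{lem-zdbfhsset}, then check by direct substitution that this value meets the Lempel--Greenberger bound (\ref{eqn-fhsbound}) for each sequence and the second bound of Lemma~\ref{lem-fhssetbound} for the set. Your computations check out --- in particular the division $n=k\ell+\epsilon$ with $k=\frac{q^{m-v}-1}{r}$ and $\epsilon=\frac{q^v-1}{r}$ (valid since $r\mid q-1$ forces $r\mid q^v-1$), and the ceiling argument locating the individual bound in $\bigl(\frac{q^{m-v}-1}{r}-1,\ \frac{q^{m-v}-1}{r}\bigr]$ --- so the proposal is correct and complete.
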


In applications, FH sequences over a finite field are required to have large linear complexity~\cite{Kumar88}. For a sequence ${\bf s} = (s_t)$ of period $N$ over a finite field ${\mathbb F}$, the {\em linear complexity} $\lc({\bf s})$ is defined to be the least positive integer $L$ such that there exist constants $c_0 = 1$, $c_1, \ldots, c_L \in {\mathbb F}$ such that
$$
- s_i = c_1 s_{i-1} + c_2 s_{i-2} + \cdots + c_L s_{i-L}
$$
for all $i \ge L$. A polynomial of the form
$$
M(x) = c_0 + c_1 x + \cdots + c_L x^L \in {\mathbb F}[x],
$$
is called the {\em minimal polynomial} of the sequence ${\bf s}$. The following lemma is useful to determine the minimal polynomial and the linear complexity.

\begin{lemma}\label{lem-lc}\cite{AB92}
  Every sequence ${\bf s} = (s_t)$ over $\gf_q$ of period $q^m - 1$ has a unique expansion of the form 
  $$
  s_t = \sum^{q^m - 2}_{i=0} c_i \beta^{it}, \textrm{ for all $0 \leq t \leq q^m - 2$},
  $$
  where $\beta$ is a primitive element of the extension field $\gf_{q^m}$ and $c_i \in \gf_{q^m}$ for $0 \leq i \leq q^m - 2$. Define the index set $I := \{i : c_i \ne 0, \ 0 \leq i \leq q^m - 2 \}$, then the minimal polynomial $M(x)$ of the sequence ${\bf s}$ is
  $$
  M(x) = \prod_{i \in I} (x - \beta^i ) , 
  $$
  and the linear complexity of ${\bf s}$ is the cardinality $|I|$ of the set $I$.
\end{lemma}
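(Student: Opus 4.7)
The plan is to view sequences of period $N := q^m - 1$ over $\gf_{q^m}$ as elements of the $N$-dimensional $\gf_{q^m}$-vector space $V$ of functions from $\bZ_N$ to $\gf_{q^m}$, and to exploit the $N$ functions $e_i: t \mapsto \beta^{it}$ (for $0 \le i \le N-1$) as a basis of $V$ that diagonalizes the shift operator $T$ defined by $(Ts)_t = s_{t+1}$.

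First I would establish existence and uniqueness of the expansion. The $N$ functions $e_0, \ldots, e_{N-1}$ are linearly independent over $\gf_{q^m}$: any relation $\sum_{i=0}^{N-1} a_i \beta^{it} = 0$ holding for $t = 0, 1, \ldots, N-1$ is a homogeneous linear system whose coefficient matrix is Vandermonde in the $N$ distinct values $\beta^0, \beta^1, \ldots, \beta^{N-1}$ (distinct because $\beta$ has multiplicative order exactly $N$), so all $a_i$ vanish. Since $\dim_{\gf_{q^m}} V = N$, the $e_i$'s form a basis, and for every sequence $s \in V$ the expansion $s_t = \sum_{i=0}^{N-1} c_i \beta^{it}$ exists and is unique; the fact that the given sequence is actually $\gf_q$-valued poses no issue, because uniqueness already holds over the larger field.

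Next I would identify the minimal polynomial. Since $T e_i = \beta^i e_i$, any polynomial $P(x) \in \gf_{q^m}[x]$ annihilates $e_i$ under the natural action $P(T)$ if and only if $P(\beta^i) = 0$. By linearity and the linear independence of the $e_i$'s, $P$ annihilates $s = \sum_{i \in I} c_i e_i$ (with $c_i \ne 0$ for $i \in I$) if and only if $P(\beta^i) = 0$ for every $i \in I$. Rewriting the lemma's recurrence $-s_i = c_1 s_{i-1} + \cdots + c_L s_{i-L}$ as $Q(T) s = 0$ for a polynomial $Q$ read off from the $c_k$'s, the minimal polynomial of ${\bf s}$ is precisely the monic generator of the annihilator ideal of $s$ in $\gf_{q^m}[x]$, and that generator equals $M(x) = \prod_{i \in I}(x - \beta^i)$. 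Its degree $|I|$ is therefore $\lc({\bf s})$.

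The main obstacle is the bookkeeping needed to pass between the one-sided, $c_0 = 1$-normalized linear-recurrence definition of linear complexity used in the lemma and the operator-theoretic viewpoint: one must verify that the monic polynomial obtained from the annihilator ideal matches the lemma's normalization and is not inadvertently replaced by the reciprocal ``connection polynomial'' $x^L M(1/x)$ that some references use. Once these conventions are aligned, the whole argument reduces to the standard spectral/Fourier decomposition of periodic sequences over the multiplicative group of $\gf_{q^m}$, and the two assertions of the lemma follow at once from the Vandermonde nondegeneracy and the joint-eigenvector structure of the shift.
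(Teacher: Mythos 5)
The paper offers no proof of this lemma --- it is quoted from \cite{AB92} --- so there is nothing internal to compare against and your argument must stand on its own. In outline it does: the Vandermonde basis $e_i(t)=\beta^{it}$, the diagonalization of the cyclic shift, and the identification of the annihilator of ${\bf s}=\sum_{i\in I}c_ie_i$ with the ideal generated by $\prod_{i\in I}(x-\beta^i)$ constitute the standard (Blahut-type) proof, and the existence/uniqueness part is complete. However, you flag the normalization issue and then wave it away, and it does not entirely wash out: reading the recurrence $c_0s_i+c_1s_{i-1}+\cdots+c_Ls_{i-L}=0$ through the expansion gives the condition $M(\beta^{-j})=0$ for all $j\in I$, where $M(x)=c_0+c_1x+\cdots+c_Lx^L$; so under the lemma's own normalization ($c_0=1$) the minimal polynomial is $\prod_{j\in I}(1-\beta^jx)$, whose roots are $\beta^{-j}$, not $\beta^{j}$. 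The displayed formula $M(x)=\prod_{i\in I}(x-\beta^i)$ is the monic characteristic polynomial of the forward recurrence. Either way the degree is $|I|$, so the linear-complexity claim survives, but a written proof must commit to one convention rather than leaving the sign of the exponent to ``alignment.''

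The more substantive gap is the field of coefficients. The definition of $\lc({\bf s})$ for a sequence over ${\mathbb F}=\gf_q$ requires the recurrence coefficients $c_1,\ldots,c_L$ to lie in $\gf_q$, whereas your annihilator computation takes place in $\gf_{q^m}[x]$. Since every $\gf_q$-recurrence is in particular a $\gf_{q^m}$-recurrence, what you have actually shown is $\lc({\bf s})\ge |I|$; the reverse inequality needs an annihilating polynomial of degree $|I|$ with coefficients in $\gf_q$. Your remark that the $\gf_q$-valuedness of ${\bf s}$ ``poses no issue'' addresses only the uniqueness of the expansion, not this point. The fix is short but necessary: since $s_t^q=s_t$, applying the Frobenius to the expansion and invoking uniqueness yields $c_{qi \bmod (q^m-1)}=c_i^q$, so $I$ is a union of $q$-cyclotomic cosets modulo $q^m-1$ and $\prod_{i\in I}(x-\beta^i)$ is fixed by the Frobenius acting on its coefficients, hence lies in $\gf_q[x]$. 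Only with this step does the degree $|I|$ coincide with the linear complexity as defined.
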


To determine the linear complexity of the FH sequences generated by Theorem~\ref{thm-fhsset}, we also need the following lemma.

\begin{lemma}\label{lem-cycmappoly}\cite{NW05}
  For a positive divisor $e$ of $q-1$ and $d_0, d_1, \ldots, d_{e-1} \in \gf_q$, the cyclotomic mapping polynomial $f_{d_0, d_1, \ldots, d_{e-1}} = \rho(x) x^u$ is given by
$$
f_{d_0, d_1, \ldots, d_{e-1}} = (a_{e-1} x^{(e-1)(q-1)/e} + \cdots + a_1 x^{(q-1)/e} + a_0 ) x^u
$$
with 
$$
a_i = e^{-1} \sum_{j=0}^{e-1} d_j \alpha^{-ij(q-1)/e}, \textrm{ $i = 0, 1, \ldots, e - 1$},
$$
where $e^{-1}$ denotes the inverse of $e$ modulo the characteristic of $\gf_q$, and $\alpha$ is a primitive element of $\gf_q$.
\end{lemma}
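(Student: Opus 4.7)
The plan is to exploit the fact that on $\gf_q^*$ the function $\rho(x)$ is constant on each cyclotomic class of order $e$, hence depends only on the power-residue symbol $x^{(q-1)/e}$. First I would let $\omega = \alpha^{(q-1)/e}$ be a primitive $e$-th root of unity in $\gf_q$ and observe that $x \in C_j$ iff $x^{(q-1)/e} = \omega^j$, so for $x \in \gf_q^*$ the assignment $x \mapsto \rho(x)$ factors through $x \mapsto x^{(q-1)/e}$ and takes the prescribed value $d_j$ on the fiber $\omega^j$.

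Since any function from the $e$-element set $\{1,\omega,\ldots,\omega^{e-1}\} \subset \gf_q$ into $\gf_q$ is uniquely represented by a polynomial of degree at most $e-1$ (Lagrange interpolation), I can write
$$
\rho(x) \;=\; a_{e-1} x^{(e-1)(q-1)/e} + \cdots + a_1 x^{(q-1)/e} + a_0
$$
for uniquely determined $a_0,\ldots,a_{e-1} \in \gf_q$, and these coefficients are characterized by the $e$ linear conditions $\sum_{i=0}^{e-1} a_i \omega^{ij} = d_j$ for $j = 0, 1, \ldots, e-1$.

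The second step is to invert this Vandermonde (equivalently, discrete Fourier transform) system. Since $e \mid q-1$, $e$ is coprime to the characteristic of $\gf_q$, so $e^{-1}$ exists in $\gf_q$; and $\sum_{j=0}^{e-1} \omega^{sj}$ equals $e$ when $s \equiv 0 \pmod e$ and $0$ otherwise. Multiplying the defining relations through by $\omega^{-ij}$, summing over $j$, and interchanging the order of summation collapses the inner sum to $e \cdot a_i$, yielding
$$
a_i \;=\; e^{-1} \sum_{j=0}^{e-1} d_j \, \omega^{-ij} \;=\; e^{-1} \sum_{j=0}^{e-1} d_j \, \alpha^{-ij(q-1)/e},
$$
which is exactly the stated formula. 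Multiplying $\rho(x)$ by $x^u$ then recovers the full cyclotomic mapping polynomial $f_{d_0,\ldots,d_{e-1}}(x) = \rho(x) x^u$, completing the proof.

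The argument has no real obstacle; the only point requiring attention is the invertibility of the Vandermonde matrix $(\omega^{ij})$, which rests on $\omega$ being a primitive $e$-th root of unity together with $e^{-1} \in \gf_q$. Both are guaranteed by the hypothesis $e \mid q-1$, so the DFT inversion goes through uneventfully.
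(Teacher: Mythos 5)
Your proof is correct: the paper states this lemma without proof, citing it directly from Niederreiter--Winterhof \cite{NW05}, and your argument (factoring $\rho$ through the power-residue map $x\mapsto x^{(q-1)/e}$ and inverting the resulting Vandermonde/DFT system, with $e^{-1}$ existing because $e\mid q-1$ forces $\gcd(e,p)=1$) is precisely the standard derivation of the coefficient formula. There is nothing to compare against in the paper itself, and no gap in what you wrote.
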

 
Now we are able to determine the linear complexity of the FH sequences in Theorem~\ref{thm-fhsset}.

\begin{theorem}\label{thm-lc}
  Let $\calf = \{ {\bf s}_0, {\bf s}_1, \ldots, {\bf s}_{r-1} \}$ be the set of FH sequences constructed in Theorem~\ref{thm-fhsset} with $v = 1$. Then the linear complexity of each sequence ${\bf s}_i \in \calf$ satisfies 
  $$
  m \leq \lc({\bf s}_i) \leq lm,
  $$
and both of the two equalities can be achieved by choosing suitable $\rho(t)$.
\end{theorem}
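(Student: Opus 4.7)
The plan is to expand each sequence into the Fourier form of Lemma~\ref{lem-lc} and then bound the number of surviving exponents. Since $D_0/C_0$ is cyclic of order $l$ and $l\mid q-1$, a primitive $l$-th root of unity $\xi\in\gf_q$ exists, so Lagrange interpolation on the $l$-th roots of unity in $D_0$ (the natural analogue of Lemma~\ref{lem-cycmappoly} applied to the subgroup $D_0=\langle\theta^r\rangle$ with index-$l$ subgroup $C_0$) yields
$$\rho(y)=\sum_{j=0}^{l-1}b_j\,y^{jn/l}\qquad(y\in D_0),$$
where $n=(q^m-1)/r$ and $(b_0,\ldots,b_{l-1})\in\gf_{q^m}^l$ is the inverse DFT of $(d_0,\ldots,d_{l-1})$ with respect to $\xi$.

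Substituting $y=\theta^{rt}$ and expanding the trace by Frobenius gives
$$s_i(t)=\sum_{j=0}^{l-1}\tr\!\bigl(g_ib_j\theta^{k_jt}\bigr)=\sum_{j=0}^{l-1}\sum_{s=0}^{m-1}(g_ib_j)^{q^s}\theta^{q^sk_jt},\qquad k_j:=\frac{j(q^m-1)}{l}+ru,$$
which matches the expansion in Lemma~\ref{lem-lc} with $\beta=\theta$. Hence $\lc({\bf s}_i)$ equals the number of residues $q^sk_j\pmod{q^m-1}$ surviving with nonzero coefficient after collecting like terms. The upper bound $\lc({\bf s}_i)\le lm$ is immediate from there being only $lm$ pairs $(j,s)$. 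For the lower bound, $\rho$ takes values in $\gf_{q^m}^*$, so at least one $b_j\ne 0$; exploiting $\gcd(u,m)=\gcd(r,m)=\gcd(l,m)=1$ (the last because $l\mid e$ and $\gcd(e,m)=1$), one shows that the $q$-orbit of the corresponding $k_j$ modulo $q^m-1$ has full size $m$, forcing at least $m$ surviving residues.

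For achievability, to reach $lm$ I would pick $(d_0,\ldots,d_{l-1})$ satisfying condition~(ii) generically, so that all $b_j\ne 0$ and the $l$ Frobenius orbits of $k_0,\ldots,k_{l-1}$ are pairwise disjoint of size $m$, leaving no room for cancellation; exponentially many such choices exist by Remark~\ref{rmk-const1}. To reach $m$ I would arrange the DFT so that only those $b_j$ whose indices $k_j$ lie in the single Frobenius orbit of $k_0=ru$ are nonzero, so exactly one orbit of size $m$ contributes. The main obstacle is this last step: condition~(ii) rules out the naive equalization $d_0=\cdots=d_{l-1}$, so exhibiting compatible $d_j$'s demands a careful construction exploiting the coprimality $\gcd(e,m)=1$ and the cyclic structure of $\bZ/(q^m-1)$ to collapse all non-$j=0$ contributions into the orbit of $k_0$.
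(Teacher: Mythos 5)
Your expansion of $\rho$ and of the trace is the same route the paper takes (Lemma~\ref{lem-cycmappoly} plus Lemma~\ref{lem-lc}), and the upper bound $\lc({\bf s}_i)\le lm$ is fine. But there are two genuine gaps. First, you never prove the fact that carries the whole argument: that the $lm$ exponents $q^sk_j \pmod{q^m-1}$, $0\le j<l$, $0\le s<m$, are \emph{pairwise distinct}. The paper shows that a collision $q^{k_1}(j_1(q^m-1)/l+ru)\equiv q^{k_2}(j_2(q^m-1)/l+ru)$ forces $\frac{q^m-1}{l}\mid ruq^{k_2}(q^{k_1-k_2}-1)$, hence $k_1=k_2$ and then $j_1=j_2$, using $\gcd(e,m)=\gcd(u,m)=1$. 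This is an unconditional statement about the exponents $k_j$; it does not depend on the $d_j$ at all, so it is not something to be ``arranged generically'' as you suggest for the $lm$ case. Without it, your lower bound is incomplete: knowing one $b_j\ne 0$ and that its Frobenius orbit has size $m$ does not force $m$ surviving residues, because a colliding orbit of some other $k_{j'}$ could cancel those coefficients. Once distinctness is established, one gets cleanly $\lc({\bf s}_i)=m\cdot|\{j:b_j\ne 0\}|$, from which both bounds follow (some $b_j\ne 0$ since $\rho$ is not identically zero).

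Second, your ``main obstacle'' for achieving $\lc=m$ is illusory, and as a result that half of the achievability claim is left unproven. With $d_0=\cdots=d_{l-1}$ one has $d_j/d_{k+j}=1\in C_0$, so condition~(ii) of Theorem~\ref{thm-const3} only requires $1\notin C_{uk}$ for $k\ne 0$, i.e.\ $uk\not\equiv 0\pmod{l}$ for $0<k<l$, which holds whenever $\gcd(u,l)=1$ --- in particular for $u=1$, the case singled out in Remark~\ref{rmk-const1}. The paper achieves $\lc=m$ with exactly this constant choice (then $b_0\ne 0$ and $b_j=0$ for $j>0$); no ``careful construction'' is needed. Conversely, for $\lc=lm$ you only assert that a generic condition-(ii) tuple has all $b_j\ne 0$; the paper instead exhibits the explicit witness $u=1$, $d_j=\theta^{rj}$, and verifies both conditions. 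As written, your proposal establishes neither endpoint of the claimed range.
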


\begin{proof}
  By definition, ${\bf s_i} \in \calf$ is defined as
  $$
  s_i(t) := \tr(\alpha^i \rho(t) \theta^{rut}) ,
  $$
  where $\alpha = \theta^{\frac{q^m-1}{q-1}}$. By Lemma~\ref{lem-cycmappoly}, the cyclotomic mapping polynomial can be written as
  $$
  \rho(t) = a_{l-1} \theta^{(l-1)(q^m-1)t/l} + \cdots + a_1 \theta^{(q^m-1)t/l} + a_0 
  $$
  with 
  $$
  a_i = l^{-1} \sum_{j=0}^{l-1} d_j \theta^{-ij (q^m - 1)/l} ,
  $$
  where $l^{-1}$ denotes the inverse of $l$ modulo the characteristic of $\gf_q$, and $\theta$ is a primitive element of $\gf_{q^m}$. Thus, the sequence ${\bf s}_i$ can be written as
  \begin{eqnarray}\label{eqn-lc1}
    s_i(t) & = & \alpha^i \tr\left( \rho(t) \theta^{rut} \right) \nonumber \\
    & = & \alpha^i \tr \left( \sum_{j=0}^{l-1} a_j \theta^{(q^m-1)jt/l} \theta^{rut} \right) \nonumber \\
    & = & \alpha^i \sum_{k=0}^{m-1} \sum_{j=0}^{l-1} a_j^{q^k} \theta^{q^k ( j(q^m-1)/l + ru) t} .
  \end{eqnarray}
Suppose that there exist $0 \leq j_1, j_2 \leq l - 1$ and $0 \leq k_1, k_2 \leq m-1$, such that
$$
q^{k_1} (j_1 (q^m-1)/l + ru) \equiv q^{k_2} (j_2 (q^m-1)/l + ru ) \pmod{q^m - 1}.
$$
We then have 
\begin{equation}\label{eqn-lc2}
\frac{q^m-1}{l} q^{k_2} (q^{k_1 - k_2} j_1 - j_2) + ru q^{k_2} (q^{k_1 - k_2} - 1) \equiv 0 \pmod{q^m - 1}.
\end{equation}
It follows that 
$$
\frac{q^m-1}{l} \big|  ru q^{k_2} (q^{k_1-k_2} - 1),
$$
which holds if and only if $k_1 = k_2$ since $\gcd(e,m) = \gcd(u,m) = 1$ and $e = l\cdot r$. Back to (\ref{eqn-lc2}), we obtain $j_1 = j_2$. Hence, all the exponents of $\theta$ in (\ref{eqn-lc1}) are pairwise distinct. Then by Lemma~\ref{lem-lc}, we have
$$
\lc({\bf s}_i) = m \cdot |I|,
$$
where $I = \{ a_i \ne 0: \ 0 \leq i < l\}$ and $|I| \leq l$. Recall that
$$
a_i = l^{-1} \sum_{j=0}^{l-1} d_j \theta^{-ij (q^m - 1)/l} .
$$
It is easily seen that $|I| = 1$ if $d_0 = d_1 = \cdots = d_{l-1}$. We now argue that $a_i \ne 0$ for each $0 \leq i < l$ by choosing suitable $\rho(t)$ and $u$. Specifically, let $u = 1$ and $d_j = \theta^{rj}$ for $0 \leq j < l$. It is then checked that the two conditions in Theorem~\ref{thm-const3} are satisfied, and $a_i \ne 0$ for each $0 \leq i < l$. With such $\rho(t)$ and $u$, we have $\lc({\bf s}_i) = lm$ for each $0 \leq i < r$. The proof is then completed. 
\end{proof}

\begin{remark}\label{rmk-lc}
If $v = 1$, the construction in Theorem~\ref{thm-fhsset} generates optimal sets of FH sequences with the same parameters as~\cite[Theorem 4.7]{GMY09} (see also~\cite{DY08,DMY07}). In~\cite{Wang104}, it was determined that the linear complexity of FH sequences generated by \cite[Theorem 4.7]{GMY09} is $m$. Then by comparing the linear complexity of the generated FH sequences, Theorem~\ref{thm-lc} indicates that Theorem~\ref{thm-fhsset} can generate new optimal sets of FH sequences when $|I| > 1$.
\end{remark}

\subsection{Optimal constant weight codes}

An $(n,N,d,w)_\ell$ constant weight code is a code over an abelian group $\{b_0, b_1, \ldots, b_{\ell-1}\}$ with length $n$, size $N$, and minimum distance $d$ such that the Hamming weight of each codeword is the constant $w$. Let $A_\ell(n,d,w)$ denote the maximum size of an $(n, M, d, w)_\ell$ constant weight code. An $(n, M, d, w)_\ell$ constant weight code is called {\em optimal} if the following bound is met.

\begin{lemma}\cite{FVS98}\label{lem-cwcbound}
  If $nd - 2nw + \frac{\ell}{\ell - 1} w^2 > 0$, then 
  $$
  A_\ell (n,d,w) \leq \frac{nd}{nd - 2nw + \frac{\ell}{\ell - 1} w^2} .
  $$
\end{lemma}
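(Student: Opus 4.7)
The plan is to establish the bound via a Plotkin-style double-counting argument on the sum of pairwise Hamming distances, refined to exploit the constant-weight constraint and the alphabet size $\ell$.

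First, I would fix an $(n,M,d,w)_\ell$ constant weight code $C$ with a distinguished ``zero'' symbol, and for each position $i \in \{1,\ldots,n\}$ and each symbol $b$ let $x_{i,b}$ be the number of codewords of $C$ having symbol $b$ at position $i$. Set $a_i := x_{i,0}$. Two easy identities are available: $\sum_b x_{i,b} = M$ for every $i$, and the constant weight condition gives $\sum_{i=1}^n (M - a_i) = Mw$, i.e.\ $\sum_i a_i = M(n-w)$.

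Next I would compute the total pairwise distance $S = \sum_{x \ne y \in C} d(x,y)$ in two ways. Position-by-position,
\[
S \;=\; \sum_{i=1}^n \Bigl( M^2 - \sum_b x_{i,b}^2 \Bigr) \;=\; nM^2 - \sum_{i=1}^n \sum_b x_{i,b}^2 ,
\]
while the minimum distance gives the lower bound $S \geq M(M-1)d$. To upper-bound $S$, I would minimize $\sum_b x_{i,b}^2$ at each position $i$ by applying Cauchy--Schwarz to the $\ell-1$ nonzero symbols,
\[
\sum_{b \ne 0} x_{i,b}^2 \;\geq\; \frac{(M-a_i)^2}{\ell-1} ,
\]
so $\sum_b x_{i,b}^2 \geq a_i^2 + (M-a_i)^2/(\ell-1)$. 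Summing over $i$ and using convexity (Jensen's inequality) on the linear constraint $\sum_i a_i = M(n-w)$, the minimum is attained at the balanced choice $a_i = M(n-w)/n$, producing
\[
\sum_{i=1}^n \sum_b x_{i,b}^2 \;\geq\; \frac{M^2(n-w)^2}{n} + \frac{M^2 w^2}{n(\ell-1)} .
\]

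Finally, I would chain the inequalities
\[
M(M-1)d \;\leq\; S \;\leq\; nM^2 - \frac{M^2(n-w)^2}{n} - \frac{M^2 w^2}{n(\ell-1)} ,
\]
divide through by $M$, and simplify the right-hand side using $n - (n-w)^2/n = (2nw-w^2)/n$. Isolating $M$ after collecting the $w^2$ terms yields
\[
M \Bigl[\, d - 2w + \tfrac{\ell}{\ell-1}\tfrac{w^2}{n} \Bigr] \;\leq\; d ,
\]
which, under the positivity hypothesis $nd - 2nw + \frac{\ell}{\ell-1}w^2 > 0$, becomes the claimed bound after multiplying numerator and denominator by $n$. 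The main obstacle is purely algebraic bookkeeping, namely ensuring that the two Cauchy--Schwarz steps (across nonzero symbols at each position, and across positions on the $a_i$) combine cleanly so that the denominator simplifies to the advertised expression $nd - 2nw + \frac{\ell}{\ell-1}w^2$; the positivity hypothesis is exactly what is needed to justify dividing by this quantity without reversing the inequality.
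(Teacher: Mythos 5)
The paper does not prove this lemma at all: it is quoted directly from the reference [FVS98] (Fu, Vinck, Shen) and used as a black box to certify optimality in Theorem 5.4, so there is no in-paper argument to compare against. Your Plotkin-style double count is essentially the standard derivation (and the one in the cited source), and it is correct as written. I checked the bookkeeping: with $\sum_i a_i = M(n-w)$ the two convexity steps give
$\sum_i\sum_b x_{i,b}^2 \geq \frac{M^2(n-w)^2}{n} + \frac{M^2w^2}{n(\ell-1)}$,
and since $n - \frac{(n-w)^2}{n} = \frac{2nw - w^2}{n}$ and $w^2 + \frac{w^2}{\ell-1} = \frac{\ell}{\ell-1}w^2$, the chain $M(M-1)d \leq S \leq \frac{M^2}{n}\bigl(2nw - \tfrac{\ell}{\ell-1}w^2\bigr)$ rearranges exactly to $M\bigl(nd - 2nw + \tfrac{\ell}{\ell-1}w^2\bigr) \leq nd$, whence the bound under the positivity hypothesis. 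One point worth making explicit if you write this up: both lower bounds ($\sum_i a_i^2 \geq M^2(n-w)^2/n$ and $\sum_i (M-a_i)^2 \geq M^2w^2/n$) are separately valid by Cauchy--Schwarz and happen to be attained at the same balanced point, so adding them is legitimate; you gesture at this with ``combine cleanly'' but it deserves a sentence rather than being left as an acknowledged obstacle.
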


Recently, Zhou et al. presented a method to construct constant weight codes from a set of ZDB functions~\cite{ZTWY12}. Using this method, we give the following construction of optimal constant weight codes.

\begin{theorem}\label{thm-cwc}
  Let $\cals$ be the set of ZDB functions constructed in Corollary~\ref{coro-vec}. For each $f_i \in \cals$ with $0 \leq i < r$, define a code $\calc_i$ as
  $$
  \calc_i := \left\{ c_j^i = (f_i(t_0 + t_j), \ldots, f_i(t_{n-1} + t_j)) :  t_j \in \bZ_n \right\}.
  $$
  Then the code $\calc := \bigcup_{i=0}^{r-1} \calc_i$ is an optimal constant weight code over $\gf_q^v$ with parameters 
  $$
  \left( \frac{q^m-1}{r}, q^m - 1, \frac{q^m - q^{m-v}}{r}, \frac{q^m- q^{m-v}}{r} \right)_{q^v} .  
  $$
\end{theorem}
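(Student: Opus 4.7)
My plan is to verify the four pieces of the claim in turn: constant weight, minimum distance, cardinality, and optimality. The key observation is that the parameter triple $(n,\ell,\lambda) = ((q^m-1)/r, q^v, (q^{m-v}-1)/r)$ of each $f_i$ exactly meets the lower bound of Lemma~\ref{lem-lowbound}, since a short computation gives $(n-\epsilon)(n+\epsilon-\ell)/[\ell(n-1)] = (q^{m-v}-1)/r$. Hence by Lemma~\ref{lem-lowbound} the preimage of $0 \in \gf_q^v$ has size $(q^{m-v}-1)/r$ while each of the remaining $q^v-1$ preimages has size $q^{m-v}/r$. Consequently every codeword $c^i_j \in \calc_i$ has Hamming weight $n - |f_i^{-1}(0)| = (q^m-1)/r - (q^{m-v}-1)/r = (q^m-q^{m-v})/r = w$, establishing the constant weight property.

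Next I will compute pairwise Hamming distances. For two distinct codewords $c^i_j, c^i_k \in \calc_i$ with $a := t_j - t_k \ne 0$, the number of agreeing coordinates is exactly $N_0(a) = \lambda = (q^{m-v}-1)/r$ by the ZDB property of $f_i$; hence their Hamming distance is $n - \lambda = w$. For codewords $c^{i_1}_j, c^{i_2}_k$ lying in distinct components $\calc_{i_1}, \calc_{i_2}$, the ``uniform relation'' guaranteed by Corollary~\ref{coro-vec} gives $|\{t : f_{i_1}(t + a) - f_{i_2}(t) = 0\}| = (q^{m-v}-1)/r$ for every $a \in \bZ_n$, so again the Hamming distance equals $w$. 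Thus every pair of distinct codewords is at distance exactly $d = w = (q^m-q^{m-v})/r$.

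Since every pairwise distance I just computed is positive, all $r \cdot n = q^m - 1$ codewords are distinct, so $|\calc| = q^m - 1$ and the minimum distance is $d = (q^m - q^{m-v})/r$. It remains to check the Fu--Vardy--Siavoshani bound of Lemma~\ref{lem-cwcbound}. A direct calculation using $d = w$ gives
\begin{equation*}
nd - 2nw + \frac{\ell}{\ell-1} w^2 \;=\; w\!\left( \frac{\ell w}{\ell - 1} - n \right),
\end{equation*}
and $\ell w/(\ell-1) = q^v \cdot (q^m-q^{m-v})/[r(q^v-1)] = q^m/r$, so the bracket equals $q^m/r - (q^m-1)/r = 1/r > 0$. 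The bound then reads $A_{q^v}(n,d,w) \leq nd/(w/r) = nr = q^m - 1$, which matches $|\calc|$ exactly. This confirms optimality.

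The only real work in the argument is the two distance calculations, and these are immediate from the ZDB defining property and from the cross-correlation identity of Corollary~\ref{coro-vec}; the rest is bookkeeping. I do not anticipate a genuine obstacle, though I will be careful to check that the parameters indeed meet the bound of Lemma~\ref{lem-lowbound} so that the preimage sizes are fully determined (this determines the weight), and that the two cases of the distance computation (within a $\calc_i$, and across distinct $\calc_i$'s) produce the same value, which is precisely what makes $\calc$ an equidistant optimal code.
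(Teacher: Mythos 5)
The paper itself gives no proof of Theorem~\ref{thm-cwc} (it only points to the method of~\cite{ZTWY12}), so there is nothing to compare your argument against line by line; judged on its own, your proof has one genuine gap, and it is in the very first step. You claim that $(n-\epsilon)(n+\epsilon-\ell)/[\ell(n-1)]$ equals $(q^{m-v}-1)/r$ and then invoke the equality case of Lemma~\ref{lem-lowbound} to read off the preimage sizes. That computation is only correct when $r=1$: writing $n=k\ell+\epsilon$ with $k=(q^{m-v}-1)/r$ and $\epsilon=(q^v-1)/r$, one finds $(n-\epsilon)(n+\epsilon-\ell)/[\ell(n-1)] = k - k(\ell-\epsilon-1)/(n-1)$, which is strictly smaller than $k=\lambda$ whenever $\epsilon<\ell-1$, i.e.\ whenever $r>1$; only its \emph{ceiling} equals $\lambda$. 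So the ``in particular'' clause of Lemma~\ref{lem-lowbound} does not apply, the preimage sets need not be as balanced as possible, and your claimed size $q^{m-v}/r$ for the nonzero preimages is not even an integer when $r>1$ (take $q=9$, $m=3$, $v=1$, $r=2$ as in Example~\ref{exm-zdbset1}: then $n=364$, $|f^{-1}(0)|=40$, and the eight nonzero preimages would each need size $40.5$). Moreover, even in the genuine equality case the lemma would not tell you that $0$ is the element carrying the small preimage. The fact you actually need, $|(f_i')^{-1}(0)|=(q^{m-v}-1)/r$, is true but must be proved directly from the construction: $f_i'(t)=0$ iff $g_i\rho(t)\theta^{rut}$ lies in the $(m-v)$-dimensional $\gf_q$-subspace $V=\{y:\tr(a_sy)=0,\ 0\le s<v\}$, and since $x\mapsto\prod_{s}\tr(a_s c x^u)$ is $uv$-homogeneous for each fixed $c\ne 0$, the equidistribution arguments of Lemma~\ref{lem-pre}, Corollary~\ref{coro-pre} and Lemma~\ref{lem-pre2} give $|\{x\in C_j: cx^u\in V\setminus\{0\}\}|=(q^{m-v}-1)/(lr)$, whence the count follows by summing over $j$.

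The remainder of your argument is correct and is the natural one: within a block $\calc_i$ the number of agreements between two distinct codewords is $N_0(t_j-t_k)=\lambda$ by the ZDB property; across blocks it is $(q^{m-v}-1)/r$ by the cross-correlation identity of Corollary~\ref{coro-vec}, which holds for all shifts including $a=0$; so the code is equidistant with $d=w=(q^m-q^{m-v})/r$ and has $q^m-1$ distinct codewords. Your evaluation $nd-2nw+\frac{\ell}{\ell-1}w^2=w/r>0$ together with $nd/(w/r)=nr=q^m-1$ correctly verifies optimality against Lemma~\ref{lem-cwcbound}. Once the weight computation is repaired as above, the proof is complete.
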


\section{Concluding remarks}\label{sec-con}

In this paper, we summarized two results to characterize zero-difference balanced (ZDB) functions. As the main contribution, we presented a generic construction of single ZDB functions. Based on this construction, we further gave a generic construction of sets of ZDB functions. We also extended these two results to construct new ZDB functions with flexible parameters. As applications of sets of ZDB functions, we constructed optimal sets of FH sequences, and also optimal constant weight codes. Furthermore, by determining the linear complexity, we argued that our construct can generate many new optimal sets of FH sequences.

For the ZDB functions constructed in Theorem~\ref{thm-const1}, it seems hard to determine the sizes of the preimage sets explicitly. The sizes of the preimage sets are also important parameters, e.g., they constitute the parameter $K$ in the corresponding partitioned difference family. It would also be nice if the linear complexity of FH sequences generated by Theorem~\ref{thm-fhsset} could be determined explicitly.

\section*{Acknowledgments}

The authors are very grateful to the reviewers for their helpful and constructive comments.

%\bibliographystyle{IEEEtr}

%\bibliographystyle{is-abbrv}
%\bibliography{ZDB01_to_resubmit}

\medskip
% The data information below will be filled by AIMS editorial staff
Received xxxx 20xx; revised xxxx 20xx.
\medskip

\end{document}